\newtheorem{theorem}{Theorem}
\newtheorem{property}{Property}
\newtheorem{corollary}{Corollary}
\newtheorem{proposition}{Proposition}
\newenvironment{proof}[1][Proof]{\noindent\textbf{#1.} }{\ \rule{0.5em}{0.5em}}
\def\ps@pprintTitle{%
  \let\@oddhead\@empty
  \let\@evenhead\@empty
  \let\@oddfoot\@empty
  \let\@evenfoot\@oddfoot
}
\begin{document}

\begin{frontmatter}

\title{Ambiguity aversion as a route to randomness in a duopoly game}

\author{D. Radi$^{a}$ and L. Gardini$^{b}$}

\cortext[mycorrespondingauthor]{Corresponding author: Davide Radi, Via Necchi 9, 20123, Milan, Italy.  Email: davide.radi@unicatt.it  }

\address[mymainaddress]{DiMSEFA, Catholic University of the Sacred Heart, Milan, Italy, and Department of Finance,\\
 V\v{S}B - Technical University of Ostrava, Ostrava, Czech Republic.}
\address[mysecondaryaddress]{Department of Economics, Society and Politics, University of Urbino Carlo Bo, Urbino, Italy.}

\begin{abstract}
The global dynamics is investigated for a duopoly game where the perfect foresight hypothesis is relaxed and firms are worst-case maximizers. Overlooking the degree of product substitutability as well as the sensitivity of price to quantity, the unique and globally stable Cournot-Nash equilibrium of the complete-information duopoly game, loses stability when firms are not aware if they are playing a duopoly game, as it is, or an oligopoly game with more than two competitors. This finding resembles Theocharis' condition for the stability of the Cournot-Nash equilibrium in oligopolies without uncertainty. As opposed to complete-information oligopoly games, coexisting attractors, disconnected basins of attractions and chaotic dynamics emerge when the Cournot-Nash equilibrium loses stability. This difference in the global dynamics is due to the nonlinearities introduced by the worst-case approach to uncertainty, which mirror in bimodal best-reply functions. Conducted with techniques that require a symmetric setting of the game, the investigation of the dynamics reveals that a chaotic regime prevents firms from being ambiguity averse, that is, firms are worst-case maximizers only in the quantity-expectation space. Therefore, chaotic dynamics are the result and at the same time the source of profit uncertainty.
\end{abstract}

\begin{keyword}
Cournot duopoly game; Payoff uncertainty; Ambiguity aversion; Constant expectations; Chaotic dynamics.
\end{keyword}


\end{frontmatter}


\section{Introduction}

Ambiguity, or uncertainty, is the inability of the decision-maker to formulate a unique probability or his lack of trust in any single probability estimate. As demonstrated in his seminal contribution by \cite{Ellsberg1961}, ambiguity affects the choice of economic agents. Taking a step further in this direction, the current paper analyzes the impact of ambiguity aversion on the quantity dynamics of a duopoly game when the perfect foresight hypothesis is relaxed.

Agents' ambiguity aversion is accommodated through the maxmin expected utility model of \cite{GilboaSchmeidler1989}. According to this model, an agent maximizes the minimum expected value taken over a set of probability distribution functions that represent his beliefs about the state of the world. In case ambiguity regards all the possible probability distributions over a bounded set of possible realizations, we obtain a distribution-free decision model known as the worst-case approach to uncertainty, see \cite{Ben-TalGhaouiNemirovski2009}. A game where players adopt a worst-case approach to uncertainty is denoted as game with ambiguity aversion and an equilibrium of this game is referred to as a robust-optimization equilibrium, see \cite{AghassiBertsimas2006} and \cite{CrespiRadiRocca2017}. Also referred to as a robust game, a game with ambiguity aversion can be seen as an extreme version of an incomplete information game with multiple priors, see, e.g., \cite{KajiiUi2005}, and it must not be confused with an ambiguous game, see, e.g., \cite{Marinacci2000}, where the Choquet expected utility model is employed.

Characterized by constant marginal costs and linear and downward-sloping inverse demand functions, a (symmetric) duopoly game with ambiguity aversion is considered, which generalizes the complete information version in \cite{SinghVives1984}. The sensitivity of price to quantity is uncertain as well as the degree of substitutability of the products and the composition of the industry. Consistent with their belief about the set of uncertainty, firms are worst-case maximizers as in \cite{CrespiRadiRocca2020}. As opposed to \cite{CrespiRadiRocca2020}, the perfect foresight assumption is relaxed and firms predict the next-period level of production of the competitors according to a constant expectation scheme. Then, starting from a given initial condition, the producers simultaneously update their productions at each discrete time period, which is the so-called \emph{Cournot t\^{a}tonnement} of the duopoly game.

Assuming a bounded set of possible values of the degree of substitutability of the products and of the sensitivity of price to quantity, the Cournot-Nash equilibrium loses stability when firms ignore whether they are playing a duopoly game or an oligopoly game with more than two competitors. This finding generalizes Theocharis' result, according to which the stability of the Cournot-Nash equilibrium is precluded in a complete-information oligopoly with more than three firms, see \cite{Theocharis1960}. Differently from an oligopoly with complete information and more than three firms, in a duopoly game with ambiguity aversion the loss of stability of the Cournot-Nash equilibrium is associated with chaotic dynamics, such as coexisting cyclical chaotic attractors, and disconnected basins of attraction. This is due to the nonlinearities introduced by the worst-case approach to uncertainty, which reflects in best-reply functions that become bimodal, therefore non-monotonically decreasing.

Due to the constant expectation scheme, the quantity dynamics of the duopoly game with ambiguity aversion follows a decoupled square two-dimensional system, see \cite{BischiMammanaGardini2000}, and, due to the symmetric setting, the subspace represented by firms producing the same level of output is an invariant set. The dynamics along this invariant set is related to that of the best-reply function, that is, to a one-dimensional map which is piecewise linear and non-invertible. By means of the skew-tent map as a normal form, see \cite{SushkoAvrutinGardini2015}, we describe the dynamics on the equal-level-of-production restriction of our duopoly game with ambiguity aversion. These results and some properties of decoupled square two-dimensional systems are then employed to characterize the sequence of cyclical chaotic attractors that emerge outside the equal-level-of-production restriction when the Cournot-Nash equilibrium loses stability.

Chaotic dynamics in duopoly models are not a novelty. Inspired by \cite{Rand1978}, there are several Cournot duopoly models that exhibit complicated dynamics. A first group of models is characterized by non-monotonic best-reply functions related to sophisticated cost functions, see, e.g., \cite{Kopel1996}, \cite{BischiMammanaGardini2000} and \cite{BischiKopel2001}. A second group of models is characterized by non-monotonic best-reply functions originated by sophisticated demand functions, such as the isoelastic demand curve, see, e.g., \cite{Puu1991} and \cite{TramontanaGardiniPuu2010}. Finally, a third group of models is characterized by either sophisticated dynamic adjustment mechanisms, see, e.g., \cite{BischiNaimzada2000}, or evolutionary selection processes to discriminate between expectation schemes, see, e.g., \cite{DrosteHommesTuinstra2002}. The current work adds to this literature by showing that complicated dynamics can emerge even in a very stylized duopoly model when uncertainty is combined with a constant expectation scheme and firms are worst-case maximizers. In addition to that, a peculiarity of the proposed duopoly game with ambiguity aversion is an abrupt transition from a globally stable Cournot-Nash equilibrium to a chaotic dynamics, which is obtained by perturbing the configuration of the uncertainty set. To the best of our knowledge, this is the first attempt to study the nonlinear dynamics that emerge in a duopoly game with ambiguity aversion when the perfect foresight hypothesis is relaxed.

A further aspect is worth noting. The worst-case approach to uncertainty ensures a maximum-guaranteed payoff conditionally on the realization of the expected competitor's next-period level of production. In a chaotic regime the accuracy of the expectation is reduced by the high volatility of the quantity dynamics and the maximum-guaranteed payoff becomes uncertain on its own. Indeed, under chaotic dynamics, firms are worst-case maximizers in the quantity-expectation space but are not so in the action space. Specifically, we observe a realized worst-case profit which is not the maximum-guaranteed payoff. Moreover, it is neither the maximum-guaranteed expected payoff, nor the best-possible expected payoff, nor is it a value in between. This is an extra source of profit uncertainty endogenously generated by chaotic dynamics. Therefore, firms deal with profit uncertainty that is generated, on one side, by uncertainty in the parameters of the price function and, on the other side, by inaccuracy in forecasting the next-period production of the competitor. Then, ambiguity aversion and a constant expectation scheme introduce chaotic dynamics which, in turn, bring further profit uncertainty. All in all, chaotic dynamics are caused by, and at the same time amplify, profit uncertainty.

The road map of the paper is the following. In Section \ref{Sec::MS} we introduce the model setup and we underline preliminary properties. In Section \ref{Sec::PR} we recap the global stability of the Cournot-Nash equilibrium in the complete-information version of the duopoly game. In Section \ref{Sec::ND} we investigate the global dynamics of the duopoly game in case of uncertainty, ambiguity aversion, and constant expectations. Moreover, we discuss the main economic insights of the nonlinear dynamics generated by the model. In Section \ref{Sec::C} we conclude by discussing possible extensions of the current modeling framework. \ref{Appendix:Proof} recaps technical results on decoupled square two-dimensional discrete-time dynamical systems and contains some proofs.

\section{Model setup and preliminary results}\label{Sec::MS}

A Cournot duopoly game with unknown values of payoff function's parameters and uncertainty-averse players, that generalizes the complete-information framework in \cite{SinghVives1984}, is proposed in \cite{CrespiRadiRocca2020}. Firms (or players) produce potentially differentiated goods and each firm produces one type of output only. Let $x$ be the quantity of commodity 1 produced by firm 1 and let $y$ be the quantity of commodity 2 produced by firm 2. The game is played at any time $t\in\mathbb{N}$.

Focusing on a symmetric setting, the inverse demand functions for commodity 1 and commodity 2 are assumed to be given by 
\begin{equation}
P_{1}\left(x,y;b,\gamma\right)=\max\left\{a-bx-\gamma y,0\right\}\quad \text{ and } \quad P_{2}\left(x,y;b,\gamma\right)=P_{1}\left(y,x;b,\gamma\right)
\end{equation}
respectively, where $a>0$ is the choke price and without loss of generality is fixed equal to $1$ in the following,\footnote{In the dynamic model that is developed below, $a$ is a scaling parameter, that is, $a$ is a parameter that impacts only on the quantitative dynamics (amplitude of oscillations) but not on the qualitative dynamics (bifurcation structures). Therefore, the results that follow also hold for any other positive value of $a$ other than $1$.} $b>0$ is the price sensitivity of commodity $i$ to the level of production of firm $i$ (sensitivity of price to quantity), with $i=1,2$, while $\gamma = \sigma b \left(n-1\right)$. Here, $\sigma\in\left[0,1\right]$ measures the degree of substitutability between the two products, $n$ indicates the number of firms in the industry and $\left(n-1\right)y$ represents for firm 1, respectively $\left(n-1\right)x$ represents for firm 2,  the level of production of the rest of the industry in case of (expected) homogeneous competitors.\footnote{Alternatively, $\gamma$ can be interpreted as $ b \alpha$, where the unknown value of $\alpha$ captures the uncertainty about the level of production of the competitor.}

For the sake of comparison, we consider a benchmark case without uncertainty given by the complete-information version of the game where $\sigma \in\left(0,1\right)$ and $n=2$. In the full-fledged version of the model, by contrast, we assume that firms know the choke price but are not aware of the value of $b$, of the degree of product substitutability and there is the (theoretical) possibility that other competitors enter and exit the market at any time. To be precise, firms play duopoly but, at the same time, they do not exclude that other competitors can enter and exit the market at any time.

Concerning the costs of production, they are known to firms and are set equal to zero without loss of generality.\footnote{The generalization to (symmetric) constant marginal costs does not impact on the construction of the best reply functions, therefore on the quantity dynamics of our duopoly. Indeed, fixed costs do not impact on the production decision. Moreover, if prices are negative a firm does not produce and, therefore, marginal costs are zero regardless of their functional form. Finally, if prices are positive, then a positive constant marginal cost of production can be seen as a lower value of the choke price $a$, which is set equal to $1$ being a scaling parameter. By analogous considerations linear marginal costs can be seen as shocks to $b$ and $\gamma$.} It follows that the payoff functions (or profits) of firm $1$ and firm $2$ are given by
\begin{equation}\label{ProfFunc}
\Pi_{1}\left(x,y;b,\gamma\right) = P_{1}\left(x,y;b,\gamma\right)x  \quad \text{ and } \quad \Pi_{2}\left(x,y;b,\gamma\right) = \Pi_{1}\left(y,x;b,\gamma\right)
\end{equation}
respectively.

To summarize, a firm does not know the sensitivity of price to quantity, the exact number of competitors and the degree of substitutability between the two products present in the market. Specifically, both players are neither aware of the (past) realizations of $\gamma$ and $b$ nor of all their possible values, but they rely on their own consistent and conservative belief about the true set of possible realizations $U^{*}$, which is given by an uncertainty set $U$.\footnote{An uncertainty set $U$ is a consistent belief when its worst-case realizations cannot be excluded by firms based on their historical observations and on their information set. Moreover, an uncertainty set $U$ is a conservative belief when its worst-case realizations ensure a maximum-guaranteed payoff which is never higher than the one consistent with $U^{*}$.} The assumption that firms do not know these values of the parameters requires that $b$ and $\gamma$ change over time (are random) as it is the case, for example, when either consumers or their tastes change over time.

Dealing with ambiguity aversion, the worst-case realizations are the only elements of an uncertainty set that impact on firms' decisions. Therefore, we focus on an uncertainty set made of only worst-case realizations and we assume that
\begin{equation}\label{Uibar}
U=\left\{ \left(\overline{b},\underline{\gamma}\right)\text{,} \left(\underline{b},\overline{\gamma}\right) \right\}
\end{equation}
where $\overline{b}\geq \overline{\gamma}\geq \underline{b}\geq \underline{\gamma}$. The restrictions $\overline{b}\geq \underline{\gamma}$ and $\overline{\gamma}\geq \underline{b}$ ensure that the worst-case realization $\left(\overline{b},\underline{\gamma}\right)$ is in the region $b\geq \gamma$ and is consistent with a duopoly game, while the worst-case realization $\left(\underline{b},\overline{\gamma}\right)$ is in the region $b\leq \gamma$ and is consistent with an oligopoly game with more than two players.\footnote{If the firm plays a duopoly game, then $n=2$ and $\gamma=\sigma b\leq b$. If the firm plays an oligopoly game, then $n>2$ and $\gamma=\sigma\left(n-1\right)$ can even be larger than $b$.} Thus, according to $U$ a firm does not know if she plays a duopoly game or a more general oligopoly game. The further conditions $\overline{b}\geq \overline{\gamma}$ and $\underline{b}\geq \underline{\gamma}$ are imposed to capture a sort of predisposition effect towards a duopoly game, which is justified by the fact that a firm always observes the level of production of only one competitor.\footnote{The restrictions $\overline{b}\geq \overline{\gamma}$ and $\underline{b}\geq \underline{\gamma}$ can also be considered as technical assumptions that allow us to focus on a region of the parameter space of the game where there is either a globally stable Cournot-Nash equilibrium or a chaotic regime.}

The true set $U^{*}$ of the possible realizations of $b$ and $\gamma$ does not impact on the configuration of the duopoly game with ambiguity aversion, except that $U$ must be consistent with it, and it could also be made of only a single worst-case realization. $U$ is consistent with $U^{*}$ even in this last case. Indeed, suppose a unique worst-case realization $\left(b^{*},\gamma^{*}\right)$ independently of $\left(x,y\right)$. Suppose that realization $\left(b^{*},\gamma^{*}\right)$ took place in the past (at some generic time $t$). A firm overlooks (at posterior) that $\left(b^{*},\gamma^{*}\right)$ took place as it only observes the market price $p\left(t\right)$, its own level of production $ x\left(t\right)$, the one of the competitor $y\left(t\right)$ and knows that the inverse demand function is piecewise linear and downward slopping. Then, a firm cannot exclude (at posterior) that either $\left(\overline{b},\underline{\gamma}\right)$ or $\left(\underline{b},\overline{\gamma}\right)$, where $\overline{b}>b^{*}>\underline{b}$ and $\overline{\gamma}>\gamma^{*}>\underline{\gamma}$, took place as long as they are consistent with the price equation ($p\left(t\right)=\max\left\{a-b^{*} x\left(t\right)-\gamma^{*} y\left(t\right); 0\right\}$). In other words, with the posterior information that a firm has, it cannot exclude an uncertainty set as $U$.\footnote{Despite being consistent with their historical observations, in this modeling framework we avoid assuming that firms consider the extreme values $\left(\overline{b},0\right)$ and $\left(0,\overline{\gamma}\right)$ to be the two worst cases as they may be perceived unrealistic realizations (like snow in the Sahara). Rather, we discuss how extreme the (assumed) worst-case realizations $\left(\overline{b},\underline{\gamma}\right)$ and $\left(\underline{b},\overline{\gamma}\right)$, with $\overline{b}\geq \overline{\gamma}\geq \underline{b}\geq \underline{\gamma}$, need to be in order to have specific dynamics, such as chaotic attractors.}

Affected by ambiguity aversion, firms do not rely on a probability model of the possible realizations of the unknown values of the parameters $b$ and $\gamma$. By contrast, they maximize their worst-case payoffs consistently with their belief about the uncertainty set. The function that provides the level of production that ensures the maximum-guaranteed payoff of a player, given the production of the competitor, is denoted as worst-case best-reply function. In our duopoly game, the worst-case best-reply function of firm $2$ is given by
\begin{equation}\label{wostcaseBRgen}
f\left(x\right) =
{\arg\max}_{y\geq 0}\left[ {\arg\min}_{\left(b,\gamma\right)\in U} \Pi_{2}\left(x,y;b,\gamma\right)\right]
\end{equation}
Moreover, by symmetry the worst-case best-reply function of firm $1$ is $f\left(y\right)$. 

As underlined in \cite{CrespiRadiRocca2020}, the equilibria of the current duopoly game with ambiguity aversion, denoted Cournot-robust-optimization equilibria, are Cournot-Nash equilibria of the duopoly game without uncertainty and payoff functions equal to the worst-case payoff functions. Therefore, $\left(x^{*},y^{*}\right)$ is a Cournot-robust-optimization equilibrium if and only if $x^{*}=f\left(y^{*}\right)$ and $y^{*}=f\left(x^{*}\right)$.

As opposed to \cite{CrespiRadiRocca2020}, who analyze the static configuration of the game, we assume that firms have constant expectations about the level of production of the competitor, that is $x^{e}\left(t+1\right)=x\left(t\right)$ and $y^{e}\left(t+1\right)=y\left(t\right)$. A player with constant expectations is an agent that decides his next-period level of production overlooking that, as a consequence of his current level of production, the action of the competitor will change. Thus, the assumption of constant expectations introduces a form of naivety well-known in decision theory and behavior economics, see, e.g., \cite{Hammond1976} and \cite{Machina1989}, and supported by experimental evidences, see, e.g., \cite{HeyLotito2009}.

Under this form of naivety, firms update their production levels for the next period according to the following Cournot process
\begin{equation}\label{DST}
\left(x^{\prime},y^{\prime}\right) = 
T\left(x,y\right)
\end{equation}
where
\begin{equation}\label{DST2}
T\left(x,y\right)=\left(f\left(y\right),f\left(x\right)\right)
\end{equation}
and $^{\prime}$ is the unit time advance operator.

The map $T$ hides a second form of naivety which is endogenously generated. Indeed, relaxing the perfect foresight hypothesis, at any time $t\in\mathbb{N}$ firms are worst-case maximizers in the quantity-expectation space, i.e., with respect to $x^{e}\left(t\right)$ and $y^{e}\left(t\right)$, but it is not guaranteed that they are worst-case maximizers in the action space, i.e., with respect to $x\left(t\right)$ and $y\left(t\right)$. Moreover, the map $T$ has the peculiarity that a Cournot-Nash equilibrium of the duopoly game is a fixed point of the map $T$ and a fixed point of the map $T$ is a Cournot-Nash equilibrium of the duopoly game. Since confusion does not arise, in the following the terminology fixed point will be used.

The map $T$ that defines the quantity dynamics of the duopoly game with ambiguity aversion, a trajectory of which represents a Cournot t\^{a}tonnement, has the further peculiarity that its second iterate $T^{2}$ has separate variables given by $\left(x,y\right)\rightarrow \left(f^{2}\left(x\right),f^{2}\left(y\right)\right)$. Therefore, $T$ represents a so-called decoupled square system\footnote{In the sense that square map $T^{2}$ (the second iterate of $T$) generates a one-dimensional decoupled map (specifically, two identical one-dimensional decoupled maps).} and its properties are already studied in the literature, see \cite{BischiMammanaGardini2000} and \cite{TramontanaGardiniPuu2010}. The more general of these properties are taylored to the scope of the current paper and summarized in the following proposition. The more technical ones, related to the existence and stability of periodic cycles, are reported in \ref{Appendix:Proof} for the sake of completeness.

\begin{proposition}\label{MainPropT}
Consider the two-dimensional map $T$ defined in \eqref{DST}-\eqref{DST2} and the one-dimensional map (worst-case best-reply function) $f$ defined in \eqref{wostcaseBRgen}. The map $T$ is such that:
\begin{itemize}
\item[(a)] The diagonal $\Delta $ (the straight line $x=y$) is a trapping set (i.e. $T\left(\Delta\right)\subseteq \Delta )$ and a forward trajectory
\begin{equation}
\left\{\left(x_{0},x_{0}\right),\left(x_{1},x_{1}\right),\ldots,\left(x_{n},x_{n}\right),\left(x_{n+1},x_{n+1}\right),\ldots\right\}\in\Delta
\end{equation}
of $T$ is such that $x_{n+1}=f\left(x_{n}\right)$ for all $n\geq0$;
\medskip
\item[(b)] Let $S$ be the symmetric operator such that $S\left(x,y\right)=\left(y,x\right)$. If $\mathcal{A}$ is an invariant set (as, e.g., cycles, stable/unstable sets and basins of attraction) of the phase plane (i.e. such that $T\left(\mathcal{A}\right)=\mathcal{A}$), so is $S\left(\mathcal{A}\right)$. That is, any invariant set $\mathcal{A}$ either is symmetric with respect to $\Delta$, or the symmetric one (with respect to $\Delta$) also exists.
\item[(c)] If $\left\{x_{1},x_{2},\ldots,x_{N}\right\}$ is the set of all the periodic points of $f$, then the Cartesian product $\left\{x_{1},x_{2},\ldots,x_{N}\right\} \times\left\{x_{1},x_{2},\ldots,x_{N}\right\}$ is the set of all the periodic points of $T$. The converse is also true. Moreover, if $H$ is an interval filled with periodic points of $f$, then the Cartesian product $H\times H$ is filled with periodic points of $T$, if $H$ is an
interval filled with periodic points of $f$ and $\left\{y_{1},y_{2},\ldots,y_{n}\right\}$ a cycle of $f$ not belonging to $H$, then the Cartesian products $H\times\left\{y_{1},y_{2},\ldots,y_{n}\right\}$ and $\left\{y_{1},y_{2},\ldots,y_{n}\right\}\times H$ are filled with periodic points of $T$.
\end{itemize}
\end{proposition}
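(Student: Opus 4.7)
The plan is to exploit two structural identities that follow immediately from the definition of $T$: the diagonal action $T(x,x)=(f(x),f(x))$, and the commutation with the swap, $T\circ S=S\circ T$. The second is checked directly since $T(S(x,y))=T(y,x)=(f(x),f(y))$ while $S(T(x,y))=S(f(y),f(x))=(f(x),f(y))$. Iterating this commutation yields the decoupled-square identity $T^{2}(x,y)=(f^{2}(x),f^{2}(y))$, on which part (c) rests.

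Part (a) is then a one-line consequence of $T(x,x)=(f(x),f(x))\in\Delta$: the diagonal is trapping, and writing a trajectory on $\Delta$ as $(x_n,x_n)$ the recursion collapses to $x_{n+1}=f(x_n)$. Part (b) follows from $T\circ S=S\circ T$: if $T(\mathcal{A})=\mathcal{A}$, then $T(S(\mathcal{A}))=S(T(\mathcal{A}))=S(\mathcal{A})$, so $S(\mathcal{A})$ is invariant; since $S$ is an involution, either $S(\mathcal{A})=\mathcal{A}$ (invariance coincides with symmetry in $\Delta$) or $S(\mathcal{A})$ is a distinct symmetric copy.

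For part (c) I would treat the even and odd periods separately. From the identities above, $T^{2m}(x,y)=(f^{2m}(x),f^{2m}(y))$ and $T^{2m+1}(x,y)=(f^{2m+1}(y),f^{2m+1}(x))$. If $(x_0,y_0)$ has period $k=2m$ for $T$, then directly $f^{2m}(x_0)=x_0$ and $f^{2m}(y_0)=y_0$, so both coordinates are periodic for $f$. If $k=2m+1$, then $f^{2m+1}(y_0)=x_0$ and $f^{2m+1}(x_0)=y_0$, so $f^{2(2m+1)}(x_0)=x_0$ and $f^{2(2m+1)}(y_0)=y_0$; again both coordinates are periodic for $f$. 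Conversely, if $x_0$ and $y_0$ are periodic for $f$ with periods $p$ and $q$, pick any even common multiple $2\ell$ of $p$ and $q$; then $T^{2\ell}(x_0,y_0)=(f^{2\ell}(x_0),f^{2\ell}(y_0))=(x_0,y_0)$, hence $(x_0,y_0)$ is periodic for $T$. This establishes the Cartesian-product description of the periodic set. The statements about an interval $H$ filled with periodic points follow by applying the same pointwise criterion: any $(x,y)\in H\times H$ has both coordinates periodic for $f$ and so is periodic for $T$; similarly, if $\{y_1,\ldots,y_n\}$ is a cycle of $f$ and $x\in H$, then $(x,y_i)$ has both coordinates periodic for $f$.

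The main obstacle is the odd-period case in part (c), because $T$ itself interchanges the two coordinates and is not separable; the trick is to observe that two iterations of $T$ restore the coordinates and reduce everything to the one-dimensional map $f^{2}$, at the price of passing to $2k$ rather than $k$ when converting a $T$-period into an $f$-period. All other steps are essentially direct verifications from the identities $T(x,x)=(f(x),f(x))$, $T\circ S=S\circ T$, and $T^{2}(x,y)=(f^{2}(x),f^{2}(y))$.
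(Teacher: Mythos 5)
Your proposal is correct and follows essentially the same route as the paper: the one-line diagonal check for (a), the commutation $T\circ S=S\circ T$ for (b), and the decoupled-square identity $T^{2}(x,y)=(f^{2}(x),f^{2}(y))$ with an even/odd-period case split for (c). The only cosmetic difference is that in the converse direction of (c) you take an arbitrary even common multiple of the two $f$-periods where the paper takes the least common multiple and then doubles it if odd; both arguments are equivalent.
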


\medskip

\begin{proof}[Proof of Proposition \ref{MainPropT}]
The proof of (a) is straightforward: let $\left(x,x\right)\in\Delta$ then $T\left(x,x\right)=\left(f\left(x\right),f\left(x\right)\right)\in \Delta$. To prove (b), note that $S\left(T\left(x,y\right)\right)=T\left(S\left(x,y\right)\right)$ follows from the definition of $T$. Now let $\mathcal{A}$ be an invariant set of $T$, so that $T\left(\mathcal{A}\right)=\mathcal{A}$, then $S\left(T\left(\mathcal{A}\right)\right)=T\left(S\left(\mathcal{A}\right)\right)=S\left(\mathcal{A}\right)$ holds. It follows that either $S\left(\mathcal{A}\right)=\mathcal{A}$ (i.e. $\mathcal{A}$ is invariant and symmetric with respect to $\Delta$) or $\mathcal{A}^{\prime }=S\left(\mathcal{A}\right)$ is invariant (being $\mathcal{A}^{\prime }=T\left(\mathcal{A}^{\prime }\right)$). To prove (c), note that $T^{2}\left(x,y\right)=\left(f^{2}\left(x\right),f^{2}\left(y\right)\right)$, and if $n$ is even we also have $T^{n}\left(x,y\right)=\left(f^{n}\left(x\right),f^{n}\left(y\right) \right)$ while if $n$ is odd we have $T^{2n}\left(x,y\right)=\left(f^{2n}\left(x\right),f^{2n}\left(y\right)\right)$. So, let $x$ be a periodic point of $f$ of period $n_{1}$, $y$ be a periodic point of $f$ of period $n_{2}$, and let $n$ be the least common multiple of $n_{1}$ and $n_{2}$, then $\left(x,y\right)$ is a periodic point of $T$ since either $T^{n}\left(x,y\right)=\left(x,y\right)$ (if $n$ is even) or $T^{2n}\left(
x,y\right) =\left(x,y\right)$ (if $n$ is odd). Vice versa, let $\left(x,y\right)$ be a periodic point of $T$ of period $m$, then $T^{m}\left(x,y\right)=\left(x,y\right)$,
and if $m$ is even we also have $T^{m}\left(x,y\right)=\left(f^{m}\left(x\right),f^{m}\left(y\right)\right)=\left(x,y\right)$ while if $m$ is odd we also have $T^{2m}\left(x,y\right)  =\left(f^{2m}\left(x\right),f^{2m}\left(y\right)\right)=\left(x,y\right)$, in both cases leading to $x$ and $y$ periodic points of $f$. The reasoning does not change when the periodic points of $f$ cover densely an interval. This proves point (c).
\end{proof}

\medskip

This proposition (in particular property (a)) indicates that, assuming the same initial level of production for the two firms, that is $x_{0}=y_{0}$, the quantity dynamics of the duopoly game is related to the dynamics of the one-dimensional map $x^{\prime }=f\left(x\right)$. Moreover, this proposition (in particular property (b)) indicates that for any non-symmetric fixed point (or $n$-cycles, or $n$-cyclical chaotic attractors) of $T$, there is another one which is symmetric to it, where symmetry is meant with respect to the diagonal $\Delta$. Moreover, the basins of attraction of two symmetric fixed points (or $n$-cycles, or $n$-cyclical chaotic attractors) of $T$ are also symmetric, where symmetry is again to be understood with respect to the diagonal $\Delta$. The point (b) also implies that, considering the trajectory associated with an initial condition $\left(x_{0},y_{0}\right)$, the symmetric initial condition $\left(y_{0},x_{0}\right)$ is associated to a symmetric (with respect to $\Delta$) trajectory. In addition, Proposition \ref{MainPropT} (in particular property (c)) indicates that the dynamics in $\Delta$ reveals important aspects of the dynamics of $T$ outside $\Delta$.\footnote{Breaking the symmetry of the game, $\Delta$ is not an invariant set anymore but $T$ remains a decoupled square map. Hence, the global dynamics of the model is investigated by studying the global dynamics of a one-dimensional map $x^{'}=f_{1}\circ f_{2}\left(x\right)$ (which is topologically conjugate to $y^{'}=f_{2}\circ f_{1}\left(y\right)$), where $f_{1}$ and $f_{2}$ are the best-reply functions of firms 1 and 2, respectively. See, \cite{BischiMammanaGardini2000}.}

In the following we show that map $T$ generates chaotic dynamics for certain configurations of the uncertainty set.\footnote{A configuration of the uncertainty set $U$ is a set of conditions imposed on the four parameters $\overline{b}$, $\underline{b}$, $\overline{\gamma}$ and $\underline{\gamma}$, in addition to $\overline{b}\geq\overline{\gamma}\geq\underline{b}\geq\underline{\gamma}$.} This impacts on ambiguity aversion of firms, which they are worst-case maximizers in the quantity-expectation space but not so in the action space. At the same time, the duopoly game with ambiguity aversion has three fixed points. For the same values of the parameters, a complete-information duopoly game admits a unique and globally stable fixed point.

\section{The dynamics of the duopoly game without uncertainty}\label{Sec::PR} 

The modeling framework introduced in the previous section generalizes the complete-information setting, where players know the values of the parameters that define their own payoff function. Specifically, for $b=\overline{b}=\underline{b}$ and $\gamma = \overline{\gamma}=\underline{\gamma}$, the set $U$ is a singleton and a complete-information duopoly game, as the one considered in \cite{SinghVives1984}, is recovered. In this quantity-setting duopoly with no uncertainty, $b\geq \gamma \geq 0$ and the best-reply function $f$ is non-increasing and is given by
\begin{equation}\label{BRnom}
f\left(x\right) = \left\{ 
\begin{array}{lcr}
\frac{1}{2b}-\frac{\gamma}{2b}x & \quad  \text{if} \quad\quad  & x_{M}> x \geq 0 \\ 
\\
0 & \quad  \text{if} \quad\quad  & x\geq x_{M}
\end{array}
\right .
\end{equation}
where $x_{M} =  1/\gamma$. 

Assuming the same initial level of production for the two firms, that is $x_{0}=y_{0}$, we have $x_{t}=y_{t}$, $\forall t>0$, see property (a) in Proposition \ref{MainPropT}, and the dynamics of the duopoly game is given by the one-dimensional map $f$. Specifically, the map $f$ has a unique fixed point $x^{*}_{CN} = \frac{1}{2b+\gamma}$, with $x^{*}_{CN}\in\left(0,x_{M}\right)$, known as the Cournot-Nash equilibrium. Since the slope of the function $f$ in $\left(0,x_{M}\right)$ is equal to $-\frac{\gamma}{2b}$, with $b\geq \gamma$, we see that this equilibrium is also locally asymptotically stable. The global stability follows by noting that $f$ is a linear decreasing function for $x<x_{M}$, with $f\left(\left[0,x_{M}\right)\right)\subseteq \left[0,x_{M}\right)$, and it has a zero flat branch for $x\geq x_{M}$. Therefore, every trajectory starting in $\left[0,x_{M}\right)$ approaches asymptotically the Cournot-Nash equilibrium and every trajectory staring in $\left[x_{M},+\infty\right)$ is mapped in zero in one iteration and then converges at the equilibrium.

Then, since $T$ is a decoupled square two-dimensional system, from Property \ref{PropertyA3} in \ref{Appendix:Proof} it follows that $\left(x^{*}_{CN},x^{*}_{CN}\right)$ is a fixed point of $T$. Its local asymptotic stability follows from Proposition \ref{PropertyA5} in \ref{Appendix:Proof}. Its global stability follows by noting that $T$ is linear in $\mathcal{Z} =\left[0,x_{M}\right)\times\left[0,x_{M}\right)$, $\left(x^{*}_{CN},x^{*}_{CN}\right)\in\mathcal{Z}$ and each point outside $\mathcal{Z}$ is mapped in $\mathcal{Z}$ in a single iteration.

This result is well known in the literature and any general setting with $b\geq \gamma$ could be the complete-information benchmark of the duopoly game with ambiguity aversion. Indeed, the complete-information game with $b=\gamma$ respects the conditions $\overline{b}\geq \overline{\gamma}\geq \underline{b}\geq\underline{\gamma}$ when uncertainty vanishes and it could be considered the natural benchmark case. Nevertheless, these conditions characterize only the worst-case realizations when uncertainty is not vanishing, and the set $U$ can be enriched with non-worst-case realizations without affecting the results that follow. Therefore, the complete-information benchmark can be any other realization $\left(b,\gamma\right)$ that lies in the region $b>\gamma$ and that belongs to an uncertainty set with worst-case realizations as the ones in $U$. In addition, also an oligopoly game with more than two firms could potentially be considered the complete-information benchmark case. Indeed, introducing the further assumption that the previous level of production is observed for only two firms and all firms take these two observations to infer the levels of production of the other competitors, then the quantity dynamics of such an oligopoly game with uncertainty is still determined by the decoupled square two-dimensional map introduced above. This map indeed provides the quantity dynamics of the two firms for which productions are observable, while the levels of production of other competitors are auxiliary variables.\footnote{An auxiliary variable is a function of other state variables and does not affect the dynamics of the system.} In this last case, the current model has to be labeled as a Cournot competition with ambiguity aversion and a more appropriate configuration of the worst-case realizations requires, for example, $\overline{\gamma}\geq \overline{b}$ and $\underline{\gamma}\geq \underline{b}$, in addition to $\overline{b}\geq  \underline{b}$ and $\overline{\gamma}\geq\underline{\gamma}$. In this last case, the complete-information benchmark case is described by a dynamical system with more than two dimensions. Nevertheless, assuming a complete-information benchmark case characterized by homogeneous products and firms with the same initial level of production, the map $f$ in \eqref{BRnom} with $\gamma = b\left(n-1\right)$ describes the quantity dynamics in an oligopoly game with $n$ players. This is the classical Theocharis' setup, see \citet{Theocharis1960}, where the existence of more than three firms implies the instability of the Cournot-Nash equilibrium. Having imposed the conditions $\overline{b}\geq \overline{\gamma}\geq \underline{b}\geq\underline{\gamma}$, we consider our modeling framework as a duopoly game with ambiguity aversion.

\section{The dynamics of the duopoly game with ambiguity aversion}\label{Sec::ND}

Assuming uncertainty both on $b$ and $\gamma$, that is imposing $\overline{b}>\underline{b}$ and $\overline{\gamma}>\underline{\gamma}$ in addition to $\overline{b}\geq \overline{\gamma}\geq \underline{b}\geq\underline{\gamma}$, then the uncertainty set is not a singleton and the worst-case best-reply function is derived in \cite{CrespiRadiRocca2020} and is given by
\begin{equation}\label{wostcaseBR}
f\left(x\right) = \left\{ 
\begin{array}{lcr}
f_{l}\left(x\right)=\frac{1-\underline{\gamma} x}{2\overline{b}} & \quad \text{if}\quad \quad  & 0\leq x\leq x_{l} \\ 
f_{m}\left(x\right) =rx & \quad \text{if}\quad \quad  & x_{l}\leq x\leq x_{u} \\ 
f_{r}\left(x\right)=\frac{1-\overline{\gamma}x}{2\underline{b}} & \quad \text{if}\quad \quad  & x_{u} \leq x\leq x_{m} \\ 
0 & \quad \text{if}\quad \quad  & x\geq x_{m}
\end{array}
\right. 
\end{equation}
where $\left(0,x_{l}\right)$ and $f_{l}$ are the left side and left branch of $f$, $\left(x_{l},x_{u}\right)$ and $f_{m}$ are the middle side and middle branch of $f$, $\left(x_{u},x_{m}\right)$ and $f_{r}$ are the right side and right branch of $f$, and
\begin{equation} \label{qunderlinei}
x_{l}  = \frac{1}{\underline{\gamma } +2\overline{b}r} \text{; } \quad x_{u} = \frac{1}{\overline{\gamma}+2\underline{b}r} \text{; }\quad 
x_{m} = \frac{1}{\overline{\gamma}}     \text{; }\quad r=\frac{\overline{\gamma}  -\underline{\gamma}}{\overline{b} -\underline{b}}  
\end{equation}
Differently from the case of no uncertainty, the best-reply function $f$ is piecewise linear and it is not monotonically decreasing. This impacts on the quantity-dynamics of the duopoly game as underlined in the following, where we show that the map $T$ can exhibit chaotic dynamics when the uncertainty set is not a singleton. Exploiting the properties of the decoupled square map $T$, first we analyze its dynamics in the invariant set $\Delta$ given by the diagonal $x=y$. Then we generalize these results by investigating the dynamics outside the diagonal $\Delta$. Indeed, the properties of (symmetric) decoupled square systems stated in Proposition \ref{MainPropT} and in \ref{Appendix:Proof} indicate that existence and stability/instability of period cycles and of cyclical chaotic invariant sets of $T$ can be derived by studying the dynamics on the diagonal $\Delta$.

\subsection{The dynamics in the diagonal of the duopoly game with ambiguity aversion}

Consider the same initial level of production for the two firms, that is $x_{0}=y_{0}$. Then, the dynamics of the oligopoly game is given by $x^{\prime }=f\left(x\right)$. Note that $f$ is a one-dimensional map, which is continuous, piecewise linear and is the worst-case best-reply function in \eqref{wostcaseBR}-\eqref{qunderlinei}. The following proposition indicates the fixed points of this map.

\begin{proposition}\label{ROEoff}
Assume that $U$ is not a singleton (positive level of uncertainty) and consider $r\left(=\frac{\overline{\gamma}-\underline{\gamma}}{\overline{b}-\underline{b}}\right)\neq 1$, then a fixed point of map $f$ (given in \eqref{wostcaseBR}-\eqref{qunderlinei}) exists and is unique. Specifically, for $r<1$ the unique fixed point is in region $\left(0,x_{l}\right)$, while for $r>1$ the unique fixed point is in region $\left(x_{u},x_{m}\right)$. At $r=1$ a border collision occurs and $\left[x_{l}, x_{u}\right]$ is a segment of (non-isolated) fixed points, and no other fixed point exists.
\end{proposition}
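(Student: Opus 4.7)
The plan is to proceed by direct case analysis on the four branches of the piecewise linear map $f$ given in \eqref{wostcaseBR}: on each branch I will solve the linear equation $f(x)=x$ and check whether the (unique) candidate belongs to the domain of that branch. Since $f$ is continuous, it suffices to combine the four branch analyses to obtain global existence and uniqueness.

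First I would handle the trivial branches. The zero branch $f\equiv0$ on $[x_{m},+\infty)$ contributes no fixed point because $x_{m}=1/\overline{\gamma}>0$. On the middle branch $f_{m}(x)=rx$ the equation $rx=x$ has no solution in $[x_{l},x_{u}]\subset(0,+\infty)$ when $r\neq 1$, whereas for $r=1$ every point of $[x_{l},x_{u}]$ is fixed; this is exactly the segment of non-isolated fixed points claimed in the statement, and the remaining two branches will be shown to produce no additional fixed point in that degenerate case.

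The core of the argument is then the analysis of the outer linear branches. Solving $f_{l}(x)=x$ gives the unique candidate
\begin{equation}
x_{l}^{*}=\frac{1}{2\overline{b}+\underline{\gamma}},
\end{equation}
while solving $f_{r}(x)=x$ gives
\begin{equation}
x_{r}^{*}=\frac{1}{2\underline{b}+\overline{\gamma}}.
\end{equation}
To decide admissibility I would compare $x_{l}^{*}$ with $x_{l}=1/(\underline{\gamma}+2\overline{b}r)$ and $x_{r}^{*}$ with $x_{u}=1/(\overline{\gamma}+2\underline{b}r)$. A direct inspection of the denominators shows that $x_{l}^{*}<x_{l}$ iff $2\overline{b}r<2\overline{b}$, i.e.\ iff $r<1$, and symmetrically $x_{r}^{*}>x_{u}$ iff $r>1$; the bound $x_{r}^{*}<x_{m}=1/\overline{\gamma}$ is immediate since $\underline{b}>0$. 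Therefore: for $r<1$ only $x_{l}^{*}\in(0,x_{l})$ is a fixed point; for $r>1$ only $x_{r}^{*}\in(x_{u},x_{m})$ is a fixed point; and for $r=1$ both candidates collapse onto the endpoints of $[x_{l},x_{u}]$, producing the border collision and the continuum of fixed points.

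The only genuine obstacle is bookkeeping: one must repeatedly use the standing assumption $\overline{b}\geq\overline{\gamma}\geq\underline{b}\geq\underline{\gamma}$ together with strict inequality $\overline{b}>\underline{b}$ and $\overline{\gamma}>\underline{\gamma}$ (so that $r$ is well defined and positive) to ensure that all breakpoints in \eqref{qunderlinei} are positive and correctly ordered, and to rule out spurious coincidences of the candidates $x_{l}^{*}$ and $x_{r}^{*}$ with boundaries of the wrong branch. Once this ordering is verified, continuity of $f$ together with the exhaustive branch analysis above yields existence and uniqueness of the fixed point for $r\neq 1$ and the segment of non-isolated fixed points at $r=1$, completing the proof.
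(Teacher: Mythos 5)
Your proof is correct, but it takes a different route from the paper's. You solve the fixed-point equation explicitly on each of the four linear pieces and then test whether each candidate ($x_{l}^{*}=1/(2\overline{b}+\underline{\gamma})$ for the left branch, $x_{r}^{*}=1/(2\underline{b}+\overline{\gamma})$ for the right branch) lies in the domain of its own branch, observing that admissibility of exactly one candidate switches at $r=1$. The paper instead argues qualitatively: it gets existence from continuity together with $f(0)>0$ and $f\equiv 0$ beyond $x_{m}$, and gets uniqueness by evaluating $f$ at the two kink points, noting that $f(x_{l})=rx_{l}$ and $f(x_{u})=rx_{u}$ are simultaneously above the diagonal when $r>1$ and simultaneously below it when $r<1$, so that the single crossing is forced onto the decreasing right or left branch respectively. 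The two arguments are close cousins (both live off piecewise linearity and the ordering $0<x_{l}\leq x_{u}<x_{m}$), but yours has the advantage of producing the explicit equilibrium formulas that the paper only writes down later in Theorem \ref{Th:GDdelta0}, at the cost of the boundary bookkeeping you acknowledge; the paper's version avoids computing the candidates at all and makes the border collision at $r=1$ visually transparent, since both kink values land exactly on the diagonal there. Your handling of the $r=1$ case (both outer candidates collapsing onto the endpoints $x_{l}$ and $x_{u}$ of the fixed segment) is a correct and slightly more explicit account of why no fixed point outside $[x_{l},x_{u}]$ survives than the paper's appeal to monotonicity of the outer branches.
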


\medskip

\begin{proof}[Proof of Proposition \ref{ROEoff}]
To prove that a fixed point of $f$ exists, note that $f$ (given in \eqref{wostcaseBR}-\eqref{qunderlinei}) is continuous and by definition $f\left(0\right)>0$ and $f\left(x\right)=0$ for $x\geq x_{m}>0$. To prove that the fixed point is unique for $r\neq 1$, note that $f_{l}\left(x_{l}\right)=f_{m}\left(x_{l}\right)=rx_{l}>x_{l}$ if and only if $r>1$ and also $f_{m}\left(x_{u}\right)=f_{r}\left(x_{u}\right)=rx_{u}>x_{u}$ if and only if $r>1$. It follows that the values of the function $f$ at the two kink points $x_{l}$ and $x_{u}$ are either both above (when $r>1$) or both below (when $r<1$) the diagonal. Then, the uniqueness of the fixed point follows by observing that $f_{l}$ and $f_{r}$ are both decreasing functions and the unique fixed point belongs to $\left(x_{u},x_{m}\right)$ when $r>1$ while it belongs to $\left(0,x_{l}\right)$ for $r<1$. For $r=1$, we have $f_{m}\left(x\right)=x$ for all $x\in\left[x_{l},x_{u}\right]$, that is, a segment of fixed points exists. The two extreme points of this segment of fixed points are also kink points. Therefore, at $r=1$ a border collision occurs. To prove that no other fixed point exists, note that $f$ is decreasing in $\left[0, x_{l}\right]$ and in $\left[x_{u}, x_{m}\right]$, with $f\left(x\right)=0$ for $x>x_{m}$.
\end{proof}

\medskip

Before discussing the global dynamics of map $f$, let us consider the local stability properties of its fixed points. 

\begin{proposition}\label{FixedPointsStabilityDelta0}
Consider $x^{*}$ to be a fixed point of $f$ defined in \eqref{wostcaseBR}-\eqref{qunderlinei}, we have that
\begin{itemize}
\item It is locally asymptotically stable when it belongs to $\left(0,x_{l}\right)$, i.e. when it belongs to the left side of $f$;
\item It is marginally stable\footnote{Here, marginally stable means derivative of $f$ equal to $+1$. More generally, a fixed point that is marginally stable is a fixed point that is stable but not attracting. It is stable because all trajectories starting in a sufficiently close neighborhood of the fixed point remain in a neighborhood of the fixed point itself, however, it is not attracting, as it does not attract trajectories starting in a neighborhood of it.} when it belongs to $\left(x_{l},x_{u}\right)$, i.e. if it belongs to the middle side of $f$;
\item If it belongs to $\left(x_{u},x_{m}\right)$, i.e. to the right side of $f$, then it is locally asymptotically stable for $\overline{\gamma }< 2\underline{b}$, it undergoes a degenerate flip bifurcation for $\overline{\gamma }= 2\underline{b}$, and it is repelling for $\overline{\gamma }> 2\underline{b}$.
\end{itemize}
\end{proposition}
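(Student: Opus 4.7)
The plan is to use the fact that $f$ is piecewise linear, so on each open linear branch the derivative is a constant and the standard eigenvalue criterion for a one-dimensional map applies: the fixed point is locally asymptotically stable if the slope of the branch has modulus less than one, and repelling if it has modulus greater than one; the boundary case requires a slightly more careful analysis that will still be elementary because the map is piecewise affine.

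First I would read off the four slopes from \eqref{wostcaseBR}: on the left branch $f_l'=-\underline{\gamma}/(2\overline{b})$, on the middle branch $f_m'=r$, on the right branch $f_r'=-\overline{\gamma}/(2\underline{b})$, and on the flat branch $f'=0$. Then, using the standing ordering $\overline{b}\geq\overline{\gamma}\geq\underline{b}\geq\underline{\gamma}$, I would bound the left-branch slope: $|f_l'|=\underline{\gamma}/(2\overline{b})\leq \overline{b}/(2\overline{b})=1/2<1$, which immediately yields the first bullet (stability on the left side). For the middle branch, Proposition \ref{ROEoff} tells us that a fixed point can lie in $(x_l,x_u)$ only when $r=1$, in which case $f_m$ is the identity on that interval; trajectories starting on the middle branch stay fixed, and trajectories starting sufficiently close to a point of $(x_l,x_u)$ but on an adjacent branch are mapped back into the middle segment in one iteration (since $|f_l'|$ and, as we will see below, $|f_r'|$ at $r=1$ can be controlled), giving marginal stability (stable but not attracting) in the sense defined in the footnote.

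For the right branch, I would set $\lambda:=f_r'=-\overline{\gamma}/(2\underline{b})$ and split into the three stated subcases. If $\overline{\gamma}<2\underline{b}$ then $|\lambda|<1$ so the fixed point is locally asymptotically stable. If $\overline{\gamma}>2\underline{b}$ then $|\lambda|>1$ so it is repelling. The delicate case is $\overline{\gamma}=2\underline{b}$, where $\lambda=-1$. Because the branch is affine, $f_r^{2}$ restricted to any interval that stays within $(x_u,x_m)$ under one iteration is the identity; concretely, $f_r(f_r(x))=(1-\overline{\gamma} f_r(x))/(2\underline{b})=x$. Hence in a one-sided neighborhood of the fixed point that remains on the right branch after two iterations, every point is a $2$-cycle, which is the degenerate flip bifurcation of the statement: a whole arc of $2$-cycles appears at the critical parameter instead of a single one emerging transversally.

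The main obstacle I anticipate is the middle-branch case and the degenerate flip case, because stability there is not detected by simply comparing $|f'|$ with $1$ at a single point; one must argue with the linearity of the branch and check that nearby orbits do not escape the branch (or return to it after a finite number of iterations) so that the formal statements "stable but not attracting" and "degenerate flip" are justified rigorously, using the explicit locations of the kinks $x_l,x_u,x_m$ from \eqref{qunderlinei} and the fact that $|f_l'|<1$ controls re-entry from the adjacent branch. Everything else reduces to one-line slope estimates.
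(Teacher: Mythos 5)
Your proposal is correct and follows essentially the same route as the paper's proof: read off the constant slope on each affine branch, bound the left-branch slope using the ordering $\overline{b}\geq\overline{\gamma}\geq\underline{b}\geq\underline{\gamma}$, invoke Proposition \ref{ROEoff} to reduce the middle-branch case to $r=1$ and linearity, and compare $\left|\overline{\gamma}/(2\underline{b})\right|$ with $1$ on the right branch. The extra detail you give on the degenerate flip (that $f_r\circ f_r$ is the identity, producing an interval of $2$-cycles) is consistent with, and in fact anticipates, the paper's treatment of case (iii)-(b) in Theorem \ref{Th:GDdelta0}.
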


\medskip

\begin{proof}[Proof of Proposition \ref{FixedPointsStabilityDelta0}]
The slope of the left branch of $f$ is $-\frac{\underline{\gamma }}{2\overline{b}}$
and $\frac{\underline{\gamma }}{2\overline{b}}\in\left(0,1\right)$ since $\overline{b}>\underline{b}\geq\underline{\gamma }$. Hence, since the left branch of $f$ is linear, when a fixed point in the left side of $f$ exists, it is always attracting. This proves the first point of the proposition. The slope of the middle branch of $f$ is $r$ and $x^{*}$ belongs to the middle side $\left(x_{l},x_{u}\right)$ if and only if $r=1$. Then, the marginal stability of $x^{*}$ follows by the linearity of the middle branch of $f$. This proves the second point of the proposition. The slope of the right branch of $f$ is $-\frac{\overline{\gamma }}{2\underline{b}}$. Since the right branch of $f$ is linear and we can have $\overline{\gamma }\gtreqless 2\underline{b}$, it follows that a fixed point in the right branch is attracting for $\overline{\gamma }<2\underline{b}$, it undergoes a degenerate flip bifurcation at $\overline{\gamma }=2\underline{b}$ (see \cite{SushkoGardini2010} for an overview on degenerate bifurcations) and it is repelling otherwise. This proves the third point of the proposition.
\end{proof}

\medskip

The conditions of existence and the local stability properties of the possible fixed points of the map $f$ stated in Propositions \ref{ROEoff} and \ref{FixedPointsStabilityDelta0} are used to characterize the global dynamics of $f$. Specifically, the following theorem identifies six dynamic configurations that are represented in Figure \ref{Fig::qualitative}.

\begin{figure}[hbt!]
	\begin{centering}
                   \includegraphics[scale=0.55]{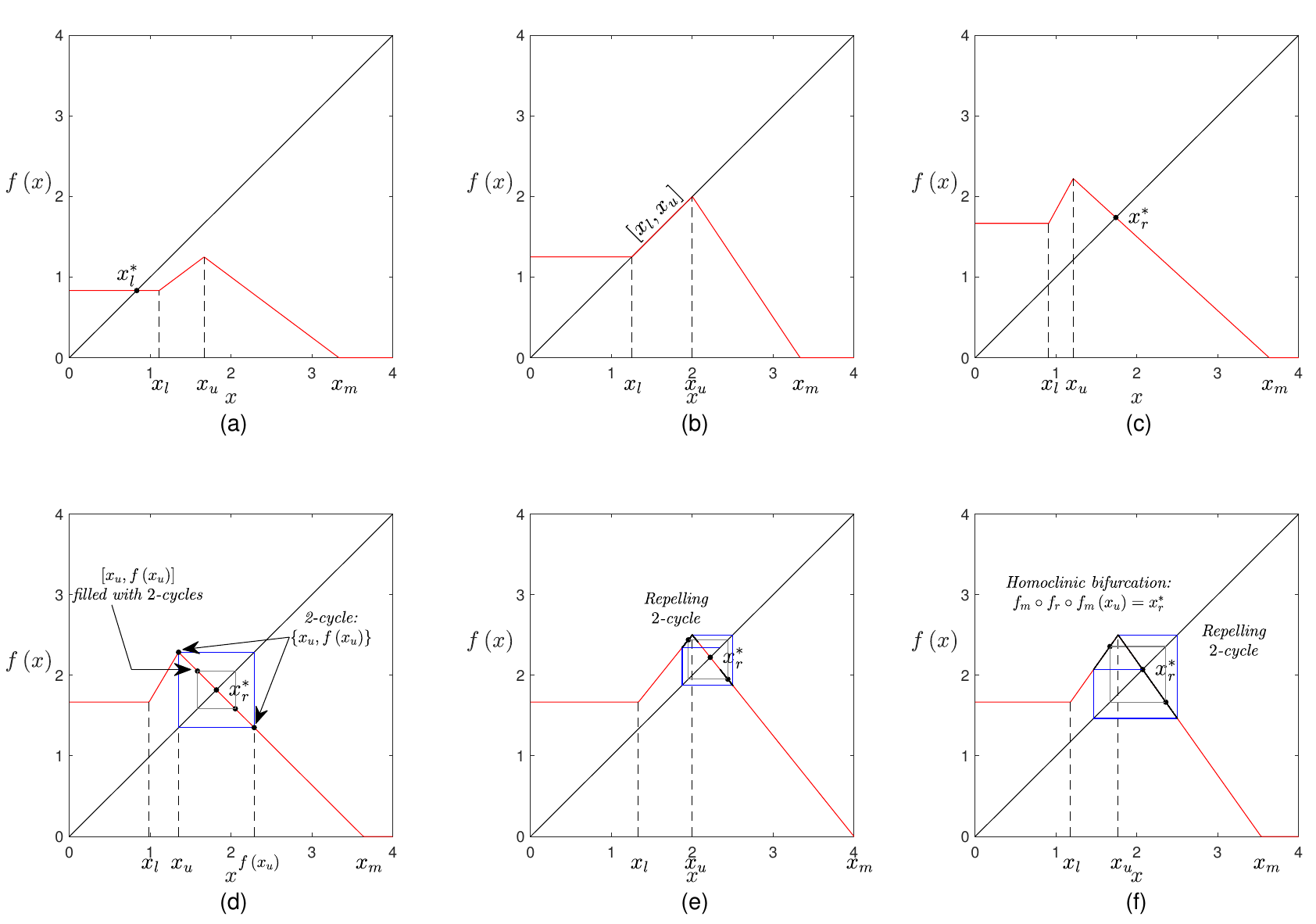}
			\caption{Graph of $f$ (in red). In (a) the case (i) in Theorem \ref{Th:GDdelta0}, parameters $a=1$, $\overline{b}=0.6$, $\underline{b}=0.2$, $\overline{\gamma}=0.3$ and $\underline{\gamma}=0$. In (b), the case (ii) in Theorem \ref{Th:GDdelta0}, parameters $a=1$, $\overline{b}=0.4$, $\underline{b}=0.1$, $\overline{\gamma}=0.3$ and $\underline{\gamma}=0$. In (c), the case (iii)-(a) in Theorem \ref{Th:GDdelta0}, parameters $a=1$, $\overline{b}=0.3$, $\underline{b}=0.15$, $\overline{\gamma}=0.275$ and $\underline{\gamma}=0$. In (d), the case (iii)-(b) in Theorem \ref{Th:GDdelta0}, parameters $a=1$, $\overline{b}=0.3$, $\underline{b}=0.1375$, $\overline{\gamma}=0.275$ and $\underline{\gamma}=0$. In (e), the case (iii)-(c) in Theorem \ref{Th:GDdelta0}, parameters $a=1$, $\overline{b}=0.3$, $\underline{b}=0.1$, $\overline{\gamma}=0.25$ and $\underline{\gamma}=0$. In (f), the case (iii)-(d) in Theorem \ref{Th:GDdelta0} (homoclinic bifurcation), parameters $a=1$, $\overline{b}=0.3$, $\underline{b}=0.1$, $\overline{\gamma}\approx 0.28284275$ and $\underline{\gamma}=0$. (For interpretation of the references to color in this figure caption, the reader is referred to the web version of this paper.)}\label{Fig::qualitative}
	\end{centering}
\end{figure}

\begin{theorem}\label{Th:GDdelta0}
Consider map $f$ in \eqref{wostcaseBR}-\eqref{qunderlinei}. We have the following cases:
\begin{itemize}
\item[(i)] For $\left(\overline{\gamma }-\underline{\gamma }\right)<\left(\overline{b}-
\underline{b}\right)$, that is $r<1$, the fixed point in the left side of $f$, that is $x_{l}^{\ast }=\frac{1}{\underline{\gamma }+2\overline{b}}$, is globally
attracting;
\item[(ii)] For $\left(\overline{\gamma }-\underline{\gamma }\right)=\left(\overline{b}-
\underline{b}\right)$, that is $r=1$, the segment $\left[x_{l},x_{u}\right]$ is filled with
fixed points, stable but not attracting. Any point either is fixed or it is
mapped into a fixed point in a few iterations;
\item[(iii)] For $\left(\overline{\gamma }-\underline{\gamma }\right)>\left(\overline{b}-\underline{b}\right)$, that is $r>1$, the unique fixed point of $f$, that is $x_{r}^{\ast }=\frac{1}{\overline{\gamma }+2\underline{b}}$, is in the right side and we have the following sub-cases:
\begin{itemize}
\item[(a)] $\overline{\gamma }<2\underline{b}$, then $x_{r}^{\ast }$ is globally attracting;
\item[(b)] $\overline{\gamma }=2\underline{b}$, then $x_{r}^{\ast }$ undergoes a degenerate flip bifurcation and the segment $\left[x_{u},f\left(x_{u}\right)\right]$ is filled with 2-cycles, stable but not attracting, and any non-fixed point is either 2-periodic or it is mapped into a 2-cycle in a few iterations;
\item[(c)] $\overline{\gamma }>2\underline{b}$, then $x_{r}^{\ast }$ is repelling, a repelling 2-cycle $\left\{x_{1},x_{2}\right\}$ of $f\left(x\right)$ exists and, close to the bifurcation, the unique attracting set consists in $2^{k}$-cyclical chaotic intervals, where $k\geq 0$ depends on the value $r$.
\item[(d)] For $\overline{\gamma }>2\underline{b}$ and $f^{2}\left(x_{u}\right)>x_{l}$, at $f_{m}\circ f_{r}\circ f_{m}\left(x_{u}\right)=x_{r}^{\ast }$ the first homoclinic bifurcation of the repelling fixed point in the right branch causes the transition from two to one chaotic interval. For $f_{m}\circ f_{r}\circ f_{m}\left(x_{u}\right)<x_{r}^{\ast }$ there is one unique chaotic interval (globally attracting).
\end{itemize}
\end{itemize}
\end{theorem}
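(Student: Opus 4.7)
My plan is a case analysis on the slope $r=(\overline{\gamma}-\underline{\gamma})/(\overline{b}-\underline{b})$ of the middle branch of $f$, and, in case $r>1$, on the position of $\overline{\gamma}/(2\underline{b})$ relative to $1$ on the right branch, combining the fixed-point information of Propositions \ref{ROEoff}--\ref{FixedPointsStabilityDelta0} with the explicit images of the four affine pieces of $f$ given in \eqref{wostcaseBR}--\eqref{qunderlinei}. A useful preliminary observation is that $f(0)=1/(2\overline{b})$ is the global maximum of $f$, so after one iteration every orbit lies in the trapping interval $J=[0,1/(2\overline{b})]$, on which the whole analysis then takes place.

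For case (i), $r<1$: the left branch is an affine contraction, the middle branch satisfies $f_m(x)=rx<x$ (so iterating $f_m$ monotonically decreases toward $0$), and the right branch sends $[x_u,x_m]$ into $[0,rx_u]\subset[0,x_u)$, so the right-branch region is not recurrent. Combining these facts branch-by-branch, I would show that every trajectory is absorbed by $[0,x_l]$ in finitely many steps, after which the affine contraction $f_l$ drives it geometrically to the unique fixed point $x_l^*$ of Proposition \ref{ROEoff}. For case (ii), $r=1$: the identities $f_l(x_l)=f_m(x_l)=x_l$ and $f_r(x_u)=f_m(x_u)=x_u$ make $[x_l,x_u]$ pointwise fixed, and a short branch-by-branch check shows any orbit is absorbed by this fixed segment in at most a few iterations.

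For case (iii), $r>1$, the unique fixed point $x_r^*$ lies in the right branch. Sub-case (a), $\overline{\gamma}<2\underline{b}$, gives $|f_r'|=\overline{\gamma}/(2\underline{b})<1$, so $x_r^*$ is locally a contraction; tracking images of the other three branches shows that every orbit enters $[x_u,x_m]$ in bounded time, where $f_r$ then completes the convergence. Sub-case (b), $\overline{\gamma}=2\underline{b}$, is a degenerate flip: $f_r$ has slope exactly $-1$, so $f_r\circ f_r$ is the identity on a subinterval and propagating this through $f_m$ produces the segment of $2$-cycles on $[x_u,f(x_u)]$ in the sense of \cite{SushkoGardini2010}. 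The central argument concerns sub-cases (c) and (d), $\overline{\gamma}>2\underline{b}$: near the peak at $x_u$, $f$ is locally a skew tent with two expanding slopes $+r$ and $-\overline{\gamma}/(2\underline{b})$, placing it squarely in the regime treated in \cite{SushkoAvrutinGardini2015}. The asymptotic attractor is then a union of $2^k$ cyclical chaotic intervals, with $k=k(r,\overline{\gamma}/(2\underline{b}))$ fixed by the symbolic itinerary of the critical value $f(x_u)$; the repelling $2$-cycle is obtained by solving the linear system $f_m(x_1)=x_2$, $f_r(x_2)=x_1$, with multiplier $-r\,\overline{\gamma}/(2\underline{b})$ of magnitude exceeding one. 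Sub-case (d) is then the standard first homoclinic bifurcation of $x_r^*$: tracing the itinerary $f_m\circ f_r\circ f_m$ of $x_u$ and imposing $f_m\circ f_r\circ f_m(x_u)=x_r^*$ marks the merger of the two chaotic intervals into a single globally attracting interval.

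The principal obstacle I foresee is rigorously reducing the four-piece map $f$ to the two-piece skew-tent framework in (iii)(c)--(d): one must verify that the outer branch $f_l$ and the flat zero branch support no additional invariant sets and that the asymptotic attractor lies entirely in the region where only $f_m$ and $f_r$ act, so that the bifurcation classification of \cite{SushkoAvrutinGardini2015} can be cited verbatim to pin down both the count $k$ of cyclical chaotic intervals in (iii)(c) and the homoclinic threshold in (iii)(d).
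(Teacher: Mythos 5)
Your overall architecture (case split on $r$, then on $\overline{\gamma}/(2\underline{b})$, with the skew-tent normal form of \cite{SushkoAvrutinGardini2015} doing the work in (iii)(c)--(d)) matches the paper's, and your identification of the repelling 2-cycle through $f_{m}(x_{1})=x_{2}$, $f_{r}(x_{2})=x_{1}$ and of the homoclinic condition in (iii)(d) is correct. But there are genuine gaps. Your preliminary observation is false: $f$ is bimodal, with a second local maximum $f\left(x_{u}\right)=rx_{u}$ at the kink $x_{u}$, and $rx_{u}>1/(2\overline{b})$ precisely when $\overline{\gamma}>2\underline{\gamma}$ (as in all of the paper's chaotic examples, where $\underline{\gamma}=0$); so $f(0)$ is not the global maximum and $[0,1/(2\overline{b})]$ is not trapping. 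The same over-optimism about invariance undermines the global-attractivity arguments in (i) and (iii)(a). In (i), $[0,x_{l}]$ need not be forward invariant: with $\overline{b}=1$, $\underline{b}=0.9$, $\overline{\gamma}=0.95$, $\underline{\gamma}=0.87$ one has $r=0.8<1$ yet $f(0)=0.5>x_{l}\approx 0.405>x_{u}\approx 0.418$ fails--indeed $f(0)>x_{u}$--so the orbit of $0$ leaves the left region and visits the right branch; ``absorbed by $[0,x_{l}]$, after which $f_{l}$ contracts'' is therefore not a proof. In (iii)(a), $f_{r}$ maps $[x_{u},x_{m}]$ onto an interval reaching down to $0$, so orbits entering $[x_{u},x_{m}]$ immediately leave it and ``$f_{r}$ completes the convergence'' is not available. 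The paper's mechanism for both global-attractivity claims is different and is the idea you are missing: it shows that the second iterate $f^{2}$ lies strictly above the diagonal on $[0,x^{\ast})$, hence no 2-cycle exists, and then invokes the criterion of \cite{ElaydiSacker2004} (local asymptotic stability plus absence of 2-cycles implies global attractivity of the fixed point of a continuous interval map). The same $f^{2}$ argument is what the paper uses to get the global statements in (ii) and (iii)(b), which you also leave unaddressed.

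The second gap is the one you yourself flag as ``the principal obstacle'' in (iii)(c)--(d): reducing the four-branch map to a genuine two-branch skew-tent map. This is not optional; the classification of $2^{k}$-cyclical chaotic bands and of the first homoclinic bifurcation in \cite{SushkoAvrutinGardini2015} can only be cited after one exhibits an absorbing interval on which only $f_{m}$ and $f_{r}$ act. The paper does exactly this: it proves $f\left(x_{l}\right)<x_{r}^{\ast}$ from $\overline{b}\geq\overline{\gamma}>2\underline{b}$ (via $\overline{\gamma}r+2\underline{b}r<2\overline{\gamma}r\leq 2\overline{b}r\leq\underline{\gamma}+2\overline{b}r$), which together with $f^{2}\left(x_{u}\right)>x_{l}$ makes $\left[f^{2}\left(x_{u}\right),f\left(x_{u}\right)\right]$ absorbing and the restriction of $f$ to it a skew-tent map. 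Until you supply that verification, your (iii)(c) and (iii)(d) remain conditional statements rather than a proof.
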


\medskip

\begin{proof}[Proof of Theorem \ref{Th:GDdelta0}]
Consider (i). Then $x_{l}^{\ast }=\frac{1}{\underline{\gamma }+2\overline{b}}$ is the unique fixed point of $f$ and lies in the left side of $f$, see Proposion \ref{ROEoff}. Moreover, it is globally attracting since the graph of map $f$ is below the diagonal for $x>x_{l}$. See Figure \ref{Fig::qualitative}(a). This proves (i). Consider (ii). By Proposion \ref{ROEoff} the segment $\left[x_{l},x_{u}\right]$ is filled with fixed points. See Figure \ref{Fig::qualitative}(b). As stated in Proposition \ref{FixedPointsStabilityDelta0}, these fixed points are stable but not attracting (marginally stable). In this case, since the map is decreasing on both sides of the segment $\left[x_{l},x_{u}\right]$ any point on the left side of $x_{l}$ is mapped on the right side of $x_{l}$ and any point on the right side of $x_{u}$ is mapped on the left side of $x_{u}$. It follows that any point of the phase space is either fixed or mapped into a fixed point in a few iterations. This proves (ii). Consider (iii)-(a). Then, $x_{r}^{\ast}$ is the unique fixed point of $f$ and is locally asymptotically stable, see Propositions \ref{FixedPointsStabilityDelta0} and \ref{ROEoff}. To prove the global stability of $x_{r}^{\ast}$ let us proceed as follows. Note that any point $x>x_{r}^{\ast}$ is mapped in $\left[0,x_{r}^{\ast}\right)$ so that we can consider the points in $\left[0,x_{r}^{\ast}\right)$. Moreover, when $x_{r}^{\ast}$ is locally asymptotically stable, it is $f\left(x_{u}\right)<x_{m}$, and we have that $f$ is a contraction in $\left[x_{u},f\left(x_{u}\right)\right]$. In particular, $f^{2}\left(x_{u}\right)>x_{u}$ and in the interval $\left[x_{u},x^{\ast}\right)$ $f^{2}$ is monotone increasing with $f^{2}\left(x\right)>x$. Since $f$ is increasing in the middle side we have $x_{l}<f\left(x_{l}\right)<f\left(x_{u}\right)$ and thus in the interval $\left[x_{l},x_{u}\right]$ it is
$f^{2}\left(x_{l}\right)>x_{l}$, $f^{2}\left(x_{u}\right)>x_{u}$ and the second iterate $f^{2}$ connecting $f^{2}\left(x_{l}\right)$ and $f^{2}\left(x_{u}\right)$ is either only decreasing (if $f\left(x_{l}\right)\geq x_{u}$) or first increasing and then decreasing (if
$f\left(x_{l}\right)<x_{u}$), but always above the diagonal. In the interval $\left[0,x_{l}\right]$ for $\underline{\gamma}=0$ it is $f\left(\left[0,x_{l}\right]\right)=f\left(x_{l}\right)>x_{l}$ and we are done, while for $\underline{\gamma}>0$ we have always $f\left(0\right)<x_{m}$ and thus $f^{2}\left(0\right)>0$, and, as above, $f^{2}\left(x_{l}\right)>x_{l}$, so that the second iterate $f^{2}$ connecting $f^{2}\left(0\right)$ and $f^{2}\left(x_{l}\right)$ is either always decreasing (if $f\left(x_{l}\right)\geq x_{u})$ or first increasing and then decreasing (if $f\left(x_{l}\right)<x_{u}$) but always above the diagonal. This proves that the second iterate of the map, that is $f^{2}$, is above the diagonal in the interval $\left[0,x_{r}^{\ast}\right)$ which implies the non-existence of 2-cycles of $f$. This is a well-known condition (stability of a fixed point and absence of $2$-cycles, see \cite{ElaydiSacker2004}) leading to the global attractivity of $x_{r}^{\ast}$. This proves (iii)-(a). Consider (iii)-(b). The slope of $f_{r}$ is equal to $-1$. Then the fixed point $x_{r}^{\ast}$ undergoes a degenerate flip bifurcation (since $f_{r}$ is linear), so that the segment $\left[x_{u},f\left(x_{u}\right)\right]$ is filled with 2-cycles, stable but not attracting. See Figure \ref{Fig::qualitative}(d). The graph of $f^{2}$ in the interval $\left[x_{u},f\left(x_{u}\right)\right]$ is on the diagonal (i.e. $f^{2}\left(x\right)=x$), while in the interval $\left[0,x_{u}\right)$ the graph of $f^{2}$ is above the diagonal (the reasonings are the same as for the case (iii)-(a)), so that outside $\left[x_{u},f\left(x_{u}\right)\right]$ a 2-cycle cannot exist and the interval $\left[x_{u},f\left(x_{u}\right)\right]$ is globally attracting. That is, any point inside this interval, different from the fixed point, is 2-periodic and any point outside it is mapped into that interval in a finite number of iterations. This proves (iii)-(b). Consider (iii)-(c). Note that the slope of the middle branch of $f$ is greater than $+1$, the slope of $f$ in the right branch is smaller than $-1$, and the map in this invariant interval is a skew-tent map. From \cite{SushkoAvrutinGardini2015} and \cite{AvrutinGardiniSushkoTramontana2019}, it follows that the degenerate flip bifurcation, that occurs when $\overline{\gamma}=2\underline{b}$ (case (iii)-(b)), leads to an unstable 2-cycle and to a unique attracting set that consists in $2^{k}$ chaotic intervals for some $k\geq 0$ (the value of $k$ depending on the value of $r>1$). Consider (iii)-(d). Note that $f\left(x_{l}\right)<x_{r}^{*}$, which can be rewritten as $\frac{r}{\underline{\gamma}+2\overline{b}r}<\frac{1}{\overline{\gamma}+2\underline{b}}$, is equivalent to $\overline{\gamma}r+2\underline{b}r<\underline{\gamma}+2\overline{b}r$. Since $\overline{b}\geq \overline{\gamma}$ always and $\overline{\gamma}>2\underline{b}$ as we consider case (iii)-(d), it follows that $\overline{\gamma}r+2\underline{b}r<2\overline{\gamma}r<\underline{\gamma}+2\overline{b}r$, i.e. $f\left(x_{l}\right)<x_{r}^{*}$ is ensured. Then, $\left[f^{2}\left(x_{u}\right),f\left(x_{u}\right)\right]$ is an absorbing interval for $f$ when $f^{2}\left(x_{u}\right)>x_{l}$, and $f$ in the absorbing interval $\left[f^{2}\left(x_{u}\right),f\left(x_{u}\right)\right]$ is a skew-tent map. This ensures that the condition given on the third iterate of $x_{u}$ leads to the homoclinic bifurcation of the fixed point and transition to a unique chaotic interval, see again \cite{SushkoAvrutinGardini2015}.
\end{proof}

\medskip

As specified in Theorem \ref{Th:GDdelta0}, a conservative approach to uncertainty can generate chaotic dynamics in a simple duopoly populated by players with constant expectations. In particular, we observe that chaotic dynamics require a specific configuration of the uncertainty set. First of all, comparing the two worst-case realizations, the spread on the values of the parameter $\gamma$, that defines the competitive interaction between the two firms, must be higher than the spread on the values of the parameter $b$, that represents the price reactivity to one's own level of production.\footnote{Note that the spread on the values of $b$ at the worst-case realizations, given by $\overline{b}-\underline{b}$, and the spread on the values of $\gamma$ at the worst-case realizations, given by $\overline{\gamma}-\underline{\gamma}$, cannot be used to compare the uncertainty about the values of the two parameters $b$ and $\gamma$. Rather, their impact on profit uncertainty is relevant and depends on the levels of production of firms.} Second, a firm must not know that it plays a duopoly game and must consider the existence of more than two competitors plausible, which is required by the parameter realization $\left(\overline{\gamma},\underline{b}\right)$ when $\overline{\gamma}>2\underline{b}$.

This last condition is related to Theocharis' result that can be summarized as follows, the stability of the Cournot-Nash equilibrium is guaranteed in an oligopoly game with no more than three firms. Moreover, the Cournot-Nash equilibrium becomes unstable when more than three firms are involved, see \citet{Theocharis1960}.\footnote{Theocharis considers an oligopoly with $n$ firms and homogeneous products, that is  $\gamma = b$. Specifically, the map (best-reply function) is $g\left(x\right)=\frac{1}{2b}-\frac{n-1}{2}x$ if $x\leq \bar{x}=1/\left(b\left(n-1\right)\right)$ and $g\left(x\right)=0$ for $x>\bar{x}$. In case of constant expectations and $n\geq 4$, each $x$ is mapped in a few iterations in the 2-cycle $\left\{0,\bar{x}\right\}$. Theocharis' result is generalized in \cite{Fisher1961}, where it is shown that an oligopoly with more than three firms can be stable if the marginal costs are not constant, see also \cite{Hahn1962}. Moreover, it has been recently discussed in the framework of evolutionary oligopoly games in \cite{BischiLamantiaRadi2015} and \cite{HommesOcheaTuinstra2018}.} Analogously to the current setting, Theocharis considers oligopoly games with constant expectations, constant marginal costs, inverse demand functions as the one here considered, and the particular case of homogeneous products. In this respect, our results add to the literature on the stability of the Cournot-Nash equilibrium in oligopoly games by showing that by introducing uncertainty and a worst-case approach, even in a duopoly game the Cournot-Nash equilibrium can lose stability. In order to occur, it is required that the degree of substitutability of products is uncertain (that justifies $\left(\overline{b},\underline{\gamma }\right)$ with $\overline{b}>\underline{\gamma }$) and uncertainty has to regard the composition of the industry as well, with firms that consider possible the entry of at least two further competitors into play (that justifies $\left(\underline{b},\overline{\gamma }\right)$ with $\underline{b}<2\overline{\gamma }$). 

It is worth remarking that in the current duopoly game with uncertainty, chaotic attractors emerge when the Cournot-Nash equilibrium loses stability, while chaotic dynamics cannot occur in the oligopoly games considered by Theocharis. This difference in the out-of-equilibrium dynamics is due to the worst-case best-reply functions of our duopoly game with ambiguity aversion, which are not monotonically decreasing as the best-reply functions of the oligopoly games considered by Theocharis. The configuration of the worst-case best-reply functions in the different cases identified by Theorem \ref{Th:GDdelta0} are observable in Figure \ref{Fig::qualitative}.
 
The conditions for the existence of chaotic attractors provided in Theorem \ref{Th:GDdelta0} reveal a further interesting aspect. Specifically, a reduction of profit uncertainty (generated by the uncertainty about the values of $b$ and $\gamma$) does not imply more stability. To understand this point, first note that chaotic attractors are (and their existence require equilibria to be) in the regions of the action space where $\left(\underline{b},\overline{\gamma}\right)$ is a worst-case realization. In such a context, a higher $\underline{b}$ implies a further reduction of the profit at the worst-case realization, increasing therefore the profit uncertainty measured as the profit gap between the best-case realization $\left(\overline{b},\underline{\gamma}\right)$ and the worst-case realization $\left(\underline{b},\overline{\gamma}\right)$.\footnote{In line with the definition of $U$, a non-worst-case realization is a best-case realization.} On the other hand, for $r<1$ we cannot have chaotic dynamics for the duopoly game with ambiguity aversion and its unique equilibrium is in a region of the action space where the worst-case realization is $\left(\overline{b},\underline{\gamma}\right)$. In such a context, the profit at the best-case realization (in this case $\left(\underline{b},\overline{\gamma}\right)$) reduces by increasing $\underline{b}$, therefore less profit uncertainty.

Then, the one-dimensional bifurcation diagram in Figure \ref{Fig::1DLaura}(a) underlines that increasing the value of $\underline{b}$ in the parameter region $r<1$ (i.e. $\underline{b}<\overline{b}-\overline{\gamma}+\underline{\gamma}=0.1$), the unique Cournot-Nash equilibrium of the game persists but, as the profit uncertainty vanishes because $\underline{b}$ approaches the value $\overline{b}-\overline{\gamma}+\underline{\gamma} = 0.1$ (at which $r=1$), the Cournot-Nash equilibrium loses stability (entering the region of the action space where both $\left(\underline{b},\overline{\gamma}\right)$ and $\left(\overline{b},\underline{\gamma}\right)$ are worst-case realizations) and chaotic dynamics appear by increasing $\underline{b}$ further. For $\underline{b}>\overline{b}-\overline{\gamma}+\underline{\gamma} = 0.1$, chaotic dynamics persist and profit uncertainty increases by increasing $\underline{b}$ (a chaotic trajectory crosses both regions of the action space where $\left(\underline{b},\overline{\gamma}\right)$ is the only worst case realization and regions of the action space where $\left(\underline{b},\overline{\gamma}\right)$ and $\left(\overline{b},\underline{\gamma}\right)$ are worst-case realizations at the same time). However, further increasing $\underline{b}$ (therefore further increasing profit uncertainty) we have that a Cournot-Nash equilibrium regains stability. It happens for $\underline{b}>\frac{1}{2}\overline{\gamma}$. To summarize, chaotic dynamics are observed when profit uncertainty (generated by the uncertainty about the values of the parameters $b$ and $\gamma$) is limited in the region of the action space where chaotic dynamics are. This confirms that it is not the level of profit uncertainty that impacts on the stability of an equilibrium of our duopoly game with ambiguity aversion, but the configuration of the uncertainty set.

\begin{figure}[hbt!]
	\begin{centering}
		\includegraphics[scale=0.55]{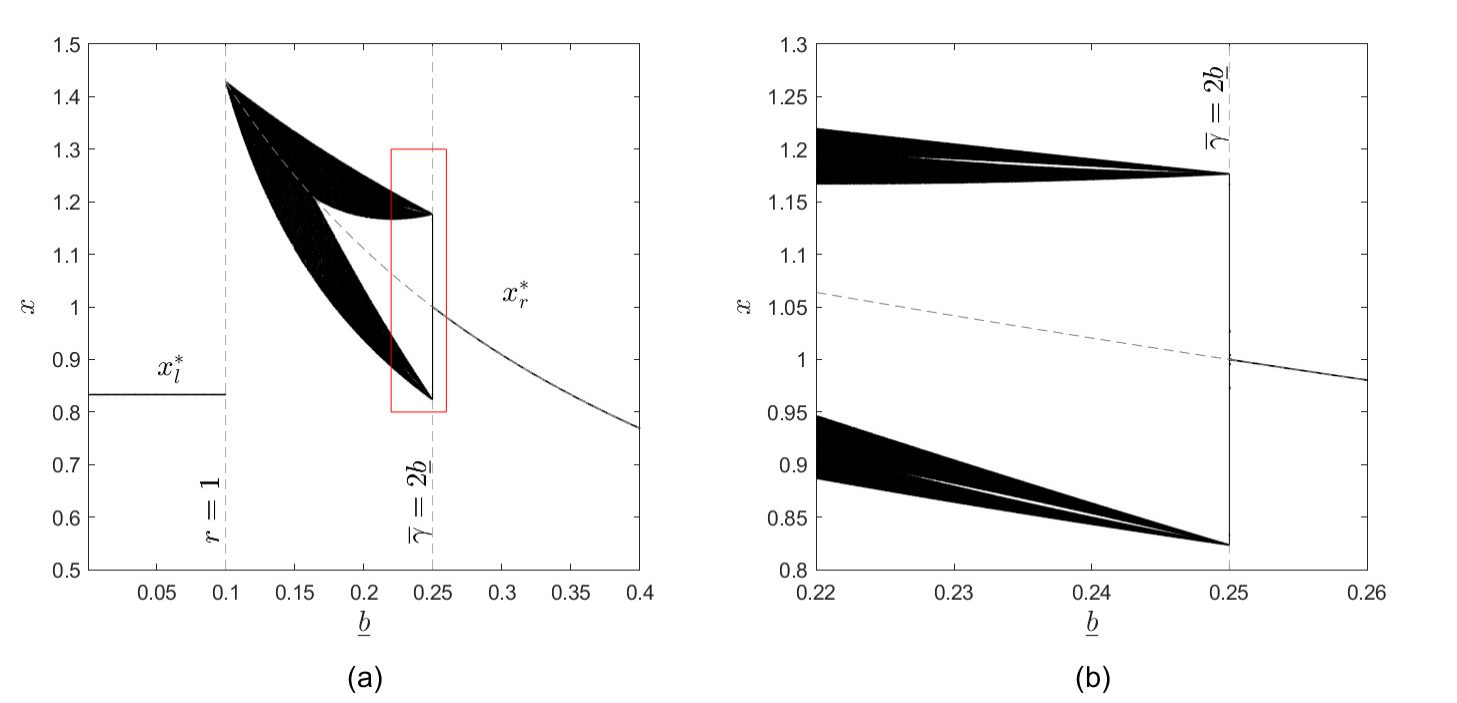}
			\caption{In (a) one-dimensional bifurcation diagram showing the long-run dynamics of $x$ as a function of $\underline{b}$ varying in the range $\left[0,0.4\right]$. In (b) an enlargement around the bifurcation value $\overline{\gamma}=2\underline{b}$. Parameters:  $\overline{b}=0.6$, $\overline{\gamma}=0.5$ and $\underline{\gamma}=0.0$. (For interpretation of the references to color in this figure caption, the reader is referred to the web version of this paper.)}\label{Fig::1DLaura}
	\end{centering}
\end{figure}

Specifically, chaotic dynamics imply a configuration of the uncertainty set such that $\overline{\gamma}>2\underline{b}$, that is a firm is uncertain between a duopoly game and an oligopoly game with more than two competitors, and $\overline{\gamma}-\underline{\gamma}>\overline{b}-\underline{b}$. The reader may wonder that a firm could learn that it plays a duopoly game, if it is indeed the case. This is a plausible observation in a context of perfect expectations. On the contrary, in a framework of constant expectations it is difficult to disentangle forecasting errors from parameter uncertainty, especially in a context of chaotic dynamics. Specifically, chaotic dynamics are caused by, and are at the same time are a source of, uncertainty. They are caused by profit uncertainty because without it the duopoly game does not exhibit chaotic dynamics. They are a source of profit uncertainty as they contribute to generating forecasting errors which do not allow firms to be worst-case maximizers in the action space, see Figure \ref{Fig::TS}(a) for an example. This implies a worst-case profit in the action space which is different from the maximum-guaranteed payoff in the action space and which is different from both the maximum-guaranteed payoff and the best-possible payoff that are expected, see Figure \ref{Fig::TS}(b). It follows that we can have more profit uncertainty in a chaotic regime even though the payoff uncertainty generated by $b$ and $\gamma$ might be higher at an equilibrium solution.\footnote{Note that a chaotic attractor in $\Delta$ involves both the middle side (where both $\left(\overline{b},\underline{\gamma}\right)$ and $\left(\underline{b},\overline{\gamma}\right)$ are the worst-case realizations) and the right side of $f$ (where only $\left(\underline{b},\overline{\gamma}\right)$ is the worst-case realization). Instead, the Cournot-Nash equilibrium is on the right side of $f$. Then, profit uncertainty generated by $b$ and $\gamma$, and measured as profit gap at the two realizations of $U$, might be higher at the Cournot-Nash equilibrium and is observed to be so. Note that profit uncertainty generated by $b$ and $\gamma$ depends indeed on the levels of production of firms.} To better underline these aspects, note that firms have the same levels of production as we consider the dynamics along the diagonal $\Delta$. Then let us consider the guaranteed achievable profit (based on constant expectations) for firm 1 at time $t$, which is given by $\Pi^{wc}_{1}\left(f\left(x^{e}\left(t\right)\right),x\left(t\right)\right)$, where $\Pi_{1}^{wc}$ indicates that it is the worst-case payoff computed consistently with the arguments. Hence, this is the worst-case payoff in the action space, that is, with regards to the realized levels of production which are given by $x\left(t\right)$ for both firms since $f\left(x^{e}\left(t\right)\right)=x\left(t\right)$ being $x^{e}\left(t\right)$ the constant expectation scheme. Moreover, let us consider for firm 1 the \emph{maximum-guaranteed expected payoff}, given by $\Pi^{wc}_{1}\left(f\left(x^{e}\left(t\right)\right),x^{e}\left(t\right)\right)$, and the \emph{best-possible expected payoff}, given by $\Pi^{bc}_{1}\left(f\left(x^{e}\left(t\right)\right),x^{e}\left(t\right)\right)$. These are the maximum-guaranteed payoff and best-possible payoff in the quantity-expectation space. Then, looking at the time series generated by a chaotic regime, we observe that firms are not worst-case maximizers in the action space, see Figure \ref{Fig::TS}(a), and we observe that the guaranteed achievable profits are often lower than then maximum-guaranteed expected payoffs, see Figure \ref{Fig::TS}(b). This confirms that firms are subject to a further form of profit uncertainty which is not the one derived by the uncertainty about $b$ and $\gamma$.

Another relevant aspect is worth underlining. A chaotic dynamics generates a historical volatility in the realized profits that it is not observable at a Cournot-Nash equilibrium. Indeed, by adopting a worst-case approach, the historical profit volatility that is observable at an equilibrium is due to the realizations of $b$ and $\gamma$ that may change over time. In addition to that, in a chaotic regime the profit volatility is inflated by the levels of production of firms which are also volatile and by the generated profits which are often lower than the maximum-guaranteed expected payoffs. This is a further source of risk that may impact on firms' investment decisions, such as investments in \emph{R\&D} or in process innovation. The current model is too simple to consider these aspects which are however worth investigating in the future by introducing ambiguity aversion and by relaxing the perfect foresight hypothesis in more sophisticated models of industrial organization.

\begin{figure}[hbt!]
	\begin{centering}
		\includegraphics[scale=0.55]{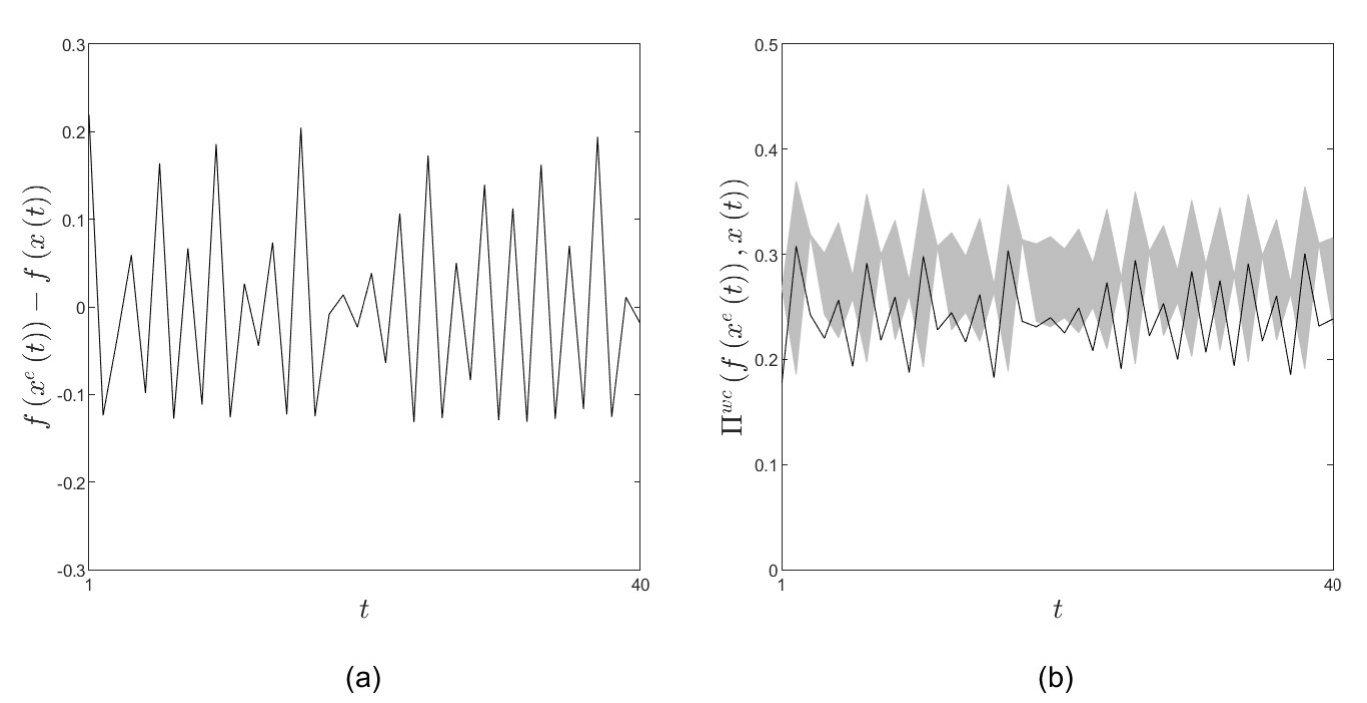}
			\caption{In (a) time series showing the long-run dynamics of the worst-case best response to $x^{e}\left(t\right)$ (ambiguity-averse firm in the quantity-expectation space) minus the worst-case best response to $x\left(t\right)$ (ambiguity-averse firm in the action space). In (b) time series of the worst-case achievable profits based on constant expectations (solid line in black). Moreover, for each time $t$ all the profits between the maximum-guaranteed expected payoff, that is $\Pi_{1}^{wc}\left(f\left(x^{e}\left(t\right)\right),x^{e}\left(t\right)\right)$, and best-possible expected payoff (consistently with $U$), that is $\Pi_{1}^{bc}\left(f\left(x^{e}\left(t\right)\right),x^{e}\left(t\right)\right)$, are in gray. Parameters:  $\overline{b}=0.6$, $\underline{b}= 0.15$, $\overline{\gamma}=0.5$ and $\underline{\gamma}=0.0$. (For interpretation of the references to color in this figure caption, the reader is referred to the web version of this paper.)}\label{Fig::TS}
	\end{centering}
\end{figure}

As a further observation, the reader may also argue that constant expectations are critical to justify in the case of periodic dynamics as firms can learn that these expectations are systematically wrong. However, periodic dynamics are not generated by the duopoly game with ambiguity aversion and predictive learning based on historical observations is limited in a chaotic regime. Indeed, a further element that is observable by analyzing the bifurcation diagrams of Figure \ref{Fig::1DLaura}, is the abrupt transition from a stable Cournot-Nash equilibrium to a chaotic dynamics, which is obtained by perturbing the parameter $\underline{b}$. This is typical of piecewise-smooth dynamical systems.

\subsection{The dynamics outside the diagonal of the duopoly game with ambiguity aversion}

Consider two different initial levels of production for the two firms that populate the duopoly game with ambiguity aversion, that is, consider the dynamics of $T$ outside the diagonal. To study the dynamics of $T$ outside the diagonal, let us start by underlying the following relations between $f$ and $T$.

\begin{corollary}\label{ThDafAT}
Consider map $f$ in \eqref{wostcaseBR}-\eqref{qunderlinei} and map $T$ in \eqref{DST}-\eqref{DST2}.
\begin{itemize}
\item[(j)] If $x^{\ast}$ is a repelling (resp. attracting) fixed point of $f$, then $\left(x^{\ast},x^{\ast}\right)$ is a repelling (resp. attracting) fixed point of $T$ belonging to the diagonal $\Delta$;
\item[(jj)] If $f$ has a 2-cycle $\left\{x_{1},x_{2}\right\}$, then $T$ has two fixed points outside the diagonal $\Delta$ given by $\left(x_{1},x_{2}\right)$ and $\left(x_{2},x_{1}\right)$, and a 2-cycle of $T$ on $\Delta$ given by $\left\{
\left(x_{1},x_{1}\right),\left(x_{2},x_{2}\right)\right\}$. Moreover, for $r>1$ and $\overline{\gamma}>2\underline{b}$, $f$ has an unstable 2-cycle $\left\{x_{1},x_{2}\right\}$ which leads to a pair of repelling fixed points of $T$ outside the diagonal $\Delta$ given by $\left(x_{1},x_{2}\right)$ and $\left(x_{2},x_{1}\right)$, and to a repelling 2-cycle of $T$ on $\Delta$ given by $\left\{
\left(x_{1},x_{1}\right),\left(x_{2},x_{2}\right)\right\}$;
\item[(jjj)] If $x_{1}^{\ast}$ and $x_{2}^{\ast}$ are two different fixed points of $f$, then $\left(x_{1}^{\ast},x_{1}^{\ast}\right)$ and $\left(x_{2}^{\ast},x_{2}^{\ast}\right)$ are two fixed points of $T$ on the diagonal $\Delta$ and $\left\{\left(x_{1}^{\ast},x_{2}^{\ast}\right),\left(x_{2}^{\ast},x_{1}^{\ast}\right)\right\}$ is a $2$-cycle of $T$ external to the diagonal.
\end{itemize}
\end{corollary}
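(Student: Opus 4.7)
The plan is to treat this as essentially a direct corollary of Proposition \ref{MainPropT} together with an explicit Jacobian computation exploiting the off-diagonal block structure of $DT$. All three items are statements of the form ``periodic orbits of $f$ correspond to certain periodic orbits of $T$'', which is exactly what Proposition \ref{MainPropT}(c) delivers at the set-theoretic level; what remains is to pair up the right points and check stability.

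For part (j), I would verify by direct substitution that $T(x^{\ast},x^{\ast})=(f(x^{\ast}),f(x^{\ast}))=(x^{\ast},x^{\ast})$. For the stability statement, I would note that in a neighborhood where $f$ is differentiable (i.e., in the interior of one of the linear branches of \eqref{wostcaseBR}),
\[
DT(x^{\ast},x^{\ast})=\begin{pmatrix}0 & f'(x^{\ast})\\ f'(x^{\ast}) & 0\end{pmatrix},
\]
whose eigenvalues are $\pm f'(x^{\ast})$. Hence $|f'(x^{\ast})|<1$ (resp.\ $>1$) transfers the attracting/repelling character of $x^{\ast}$ under $f$ to $(x^{\ast},x^{\ast})$ under $T$.

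For part (jj), the identities $f(x_{1})=x_{2}$ and $f(x_{2})=x_{1}$ give $T(x_{1},x_{2})=(f(x_{2}),f(x_{1}))=(x_{1},x_{2})$ and symmetrically for $(x_{2},x_{1})$, so these are fixed points of $T$ off $\Delta$. Meanwhile $T(x_{1},x_{1})=(x_{2},x_{2})$ and $T(x_{2},x_{2})=(x_{1},x_{1})$, producing a $2$-cycle on $\Delta$. For the second half, the existence of an unstable $2$-cycle $\{x_{1},x_{2}\}$ of $f$ in the regime $r>1$, $\overline{\gamma}>2\underline{b}$ is guaranteed by Theorem \ref{Th:GDdelta0}(iii)-(c), so $|f'(x_{1})f'(x_{2})|>1$. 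The Jacobian at $(x_{1},x_{2})$ is again off-diagonal with entries $f'(x_{2}),f'(x_{1})$, so its eigenvalues are $\pm\sqrt{f'(x_{1})f'(x_{2})}$ and both have modulus exceeding $1$, proving that $(x_{1},x_{2})$ and $(x_{2},x_{1})$ are repelling. For the $2$-cycle on $\Delta$, it is simpler to look at $DT^{2}(x_{1},x_{1})=\operatorname{diag}(f'(x_{1})f'(x_{2}),f'(x_{1})f'(x_{2}))$, whose only eigenvalue has modulus $>1$, so that $2$-cycle is repelling as well.

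Part (jjj) is the most immediate: from $f(x_{1}^{\ast})=x_{1}^{\ast}$ and $f(x_{2}^{\ast})=x_{2}^{\ast}$ one reads off
\[
T(x_{1}^{\ast},x_{2}^{\ast})=(f(x_{2}^{\ast}),f(x_{1}^{\ast}))=(x_{2}^{\ast},x_{1}^{\ast}),\qquad T(x_{2}^{\ast},x_{1}^{\ast})=(x_{1}^{\ast},x_{2}^{\ast}),
\]
while $(x_{i}^{\ast},x_{i}^{\ast})$ is fixed by the definition of $T$. The only delicate point across the whole proof is that the stability argument uses $f'$ at points of the $2$-cycle, and $f$ is only piecewise linear; however, the $2$-cycle lives in the open interiors of the middle and right branches (as is visible from the skew-tent structure invoked in Theorem \ref{Th:GDdelta0}(iii)-(c)), so $f'$ is well defined there and the Jacobian computation is legitimate. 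No additional obstacle beyond that needs to be addressed.
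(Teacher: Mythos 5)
Your proof is correct and follows essentially the same route as the paper: direct substitution into $T(x,y)=(f(y),f(x))$ for the existence of the fixed points and $2$-cycles, and the off-diagonal Jacobian of $T$ for the stability claims. The only difference is cosmetic: the paper defers the eigenvalue computation to Property \ref{PropertyA5} in the appendix (whose proof is omitted as straightforward), whereas you carry it out explicitly and also flag the piecewise-linearity issue at the kink points, which the paper leaves implicit.
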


\medskip

\begin{proof}[Proof of Corollary \ref{ThDafAT}]
The results (j) and (jj) are immediate consequences of the properties recalled in \ref{Appendix:Proof} and of Theorem \ref{Th:GDdelta0}. In fact, the fixed points of $T$ are related to fixed points and 2-cycles of $f$. That is, $\left(x^{\ast },y^{\ast }\right)$ is a fixed point of $T$ if and only if $T\left(x^{\ast },y^{\ast }\right)=\left(f\left(y^{\ast }\right),f\left(x^{\ast }\right)\right)=\left(x^{\ast },y^{\ast}\right)$. This also leads to $T^{2}\left(x^{\ast },y^{\ast }\right)=\left(f^{2}\left(x^{\ast}\right),f^{2}\left(y^{\ast }\right)\right)=\left(x^{\ast },y^{\ast }\right)$. It follows that if $x^{\ast }$ is a fixed point of $f$, then $\left(x^{\ast },x^{\ast }\right)$ is a fixed point of $T$ belonging to the diagonal $\Delta$. Moreover, any 2-cycle $\left\{x_{1},x_{2}\right\}$ of $f$ leads to a pair of fixed points of $T$ outside the diagonal $\Delta$: $\left(x_{1},x_{2}\right)$ and $\left(x_{2},x_{1}\right)$, and to a 2-cycle of $T$ on $\Delta$: $\left\{\left(x_{1},x_{1}\right),\left(x_{2},x_{2}\right)\right\}$. In \ref{Appendix:Proof}, it is also recalled the stability of these cycles of $T$ which is related to that of the fixed points and 2-cycles of $f$. Property (jjj) follows by noting that $x_{1}^{\ast}$ and $x_{2}^{\ast}$ fixed points of $f$ implies $T^{2}\left(x_{1}^{\ast},x_{2}^{\ast}\right) = \left(f^{2}\left(x_{1}^{\ast}\right),f^{2}\left(x_{2}^{\ast}\right)\right)$, $T^{2}\left(x_{2}^{\ast},x_{1}^{\ast}\right) = \left(f^{2}\left(x_{2}^{\ast}\right),f^{2}\left(x_{1}^{\ast}\right)\right)$, $T^{2}\left(x_{1}^{\ast},x_{2}^{\ast}\right)=\left(x_{1}^{\ast},x_{2}^{\ast}\right)$ and $T^{2}\left(x_{2}^{\ast},x_{1}^{\ast}\right)=\left(x_{2}^{\ast},x_{1}^{\ast}\right)$.
\end{proof}

\medskip

\begin{figure}[hbt!]
	\begin{centering}
		\includegraphics[scale=0.55]{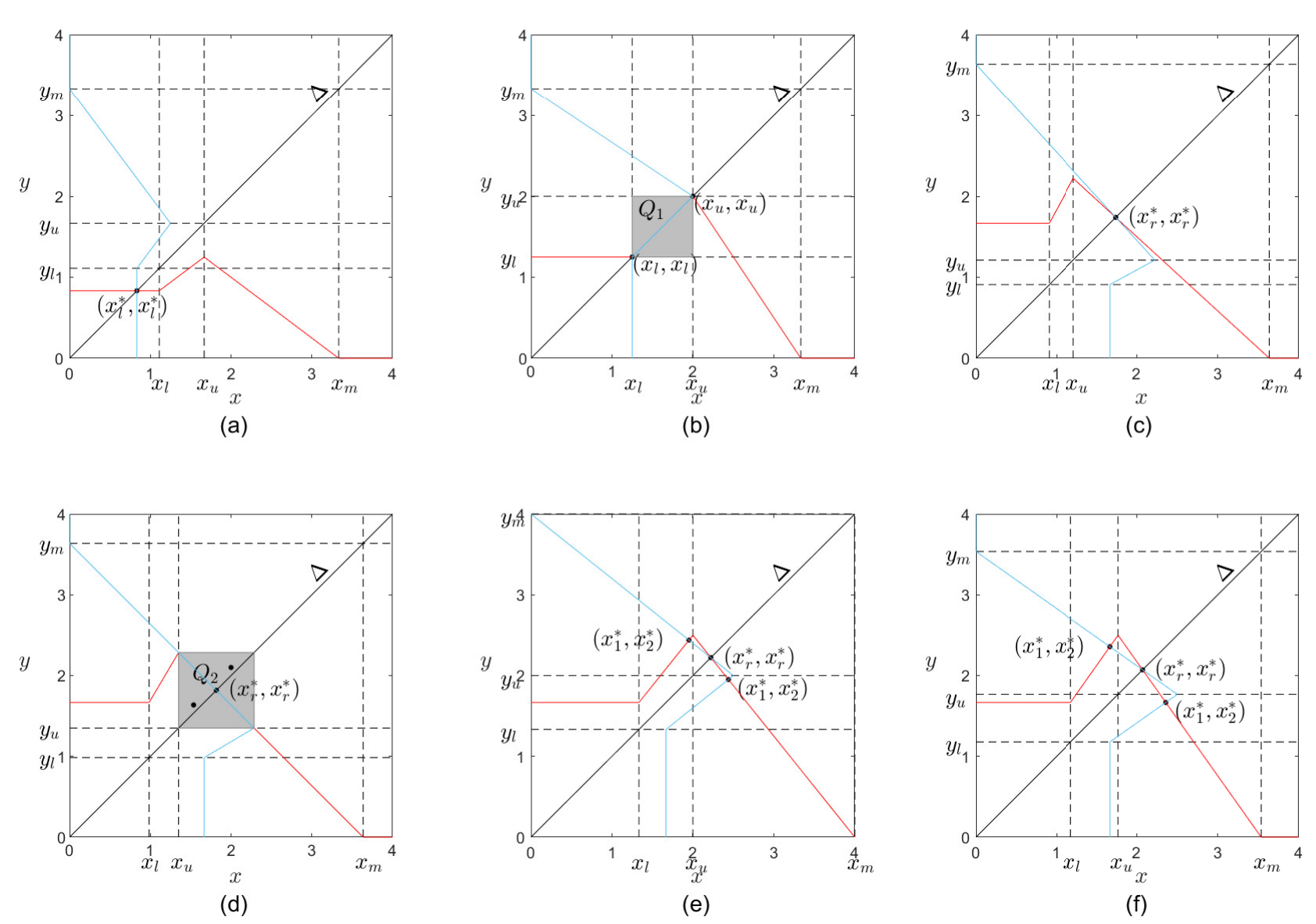}
			\caption{Worst-case best-reply function of firm 1 ($f\left(y\right)$ in light blue) and worst-case best-reply function of firm 2 ($f\left(x\right)$ in red). Parameters as in Figure \ref{Fig::qualitative}. (For interpretation of the references to color in this figure caption, the reader is referred to the web version of this paper.)}\label{Fig::qualitative2d}
	\end{centering}
\end{figure}

Combining the dynamics of map $f$ stated in Theorem \ref{Th:GDdelta0} with the relations between $f$ and $T$ stated in Corollary \ref{ThDafAT}, we can describe the attracting sets of system $T$. In particular, in the cases (i) and (iii)-(a) of Theorem \ref{Th:GDdelta0}, map $T$ has a unique fixed point on the diagonal, respectively $\left(x_{l}^{\ast},x_{l}^{\ast }\right)$ and $\left(x_{r}^{\ast },x_{r}^{\ast }\right)$, which is globally attracting, see Figure \ref{Fig::qualitative2d}(a),(c).

The two sets of parameter values that identify the cases (ii) and (iii)-(b) of Theorem \ref{Th:GDdelta0} are both bifurcation values for $T$. Specifically, for parameters as in case (ii), map $T$ has the rectangle $Q_{1}=\left[x_{l},x_{u}\right]\times\left[x_{l},x_{u}\right]$ filled with periodic points, given by 2-cycles external to the diagonal $\Delta$ and filled with fixed points on the portion of $\Delta$ that belongs to $Q_{1}$ (from property (jjj) of Corollary \ref{ThDafAT}). See Figure \ref{Fig::qualitative2d}(b). The 2-cycles are all stable but not attracting, and any point of the phase plane either is in $Q_{1}$ or its trajectory enters $Q_{1}$ in a few iterations.
For parameters as in case (iii)-(b) of Theorem \ref{Th:GDdelta0}, map $T$ has the rectangle $Q_{2}=\left[x_{u},f\left(x_{u}\right)\right]\times\left[x_{u},f\left(x_{u}\right)\right]$ filled with periodic points, given by 2-cycles ($\left\{\left(x_{1},x_{1}\right),\left(x_{2},x_{2}\right)\right\}$, $\left\{\left(x_{3},x_{3}\right),\left(x_{4},x_{4}\right)\right\}$, ...) on the portion of the diagonal $\Delta$ that belongs to $Q_{2}$, filled with fixed points ($\left(x_{1},x_{2}\right)$, $\left(x_{2},x_{1}\right)$, $\left(x_{3},x_{4}\right)$, $\left(x_{4},x_{3}\right)$, ...) on the second diagonal of $Q_{2}$, (from property (jj) of Corollary \ref{ThDafAT}) and filled with 2-cycles ($\left\{\left(x_{1},x_{3}\right),\left(x_{4},x_{2}\right)\right\}$, $\left\{\left(x_{1},x_{4}\right),\left(x_{3},x_{2}\right)\right\}$, ...) outside the diagonals of $Q_{2}$. See Figure \ref{Fig::qualitative2d}(d). The 2-cycles are all stable but not attracting, and any point of the phase plane either is in $Q_{2}$ or its trajectory enters $Q_{2}$ in a few iterations.

In the remaining cases (iii)-(c) and (iii)-(d) of Theorem \ref{Th:GDdelta0}, that is, when $f$ exhibits a $2^{k}$-cyclical chaotic attractor, map $T$ has three (unstable) equilibria, see Figure \ref{Fig::qualitative2d}(e),(f). Moreover, it has chaotic attracting sets, consisting in one unique rectangle or in cyclical rectangles, depending on the value of $k=0,1,2\ldots$, where $2^{k}$ indicates the periodicity of the cyclical chaotic attractor of $f$. Specifically, if $f$ has a unique chaotic interval $I=\left[f^{2}\left(x_{u}\right),f\left(x_{u}\right)\right]=\left[f_{r}\left(\frac{r}{\overline{\gamma }+2\underline{b}r}\right),\frac{r}{\overline{\gamma }+2\underline{b}r}\right]$, then $k=0$ and the map $T$ has a unique chaotic set consisting in the square $R=I\times I$, see Figure \ref{Fig::1-chaotic}. If $f$ has 2-cyclical chaotic intervals $I_{1}=\left[f^{3}\left(x_{u}\right),f\left(x_{u}\right)\right]$ and $I_{2}=\left[f^{2}\left(x_{u}\right),f^{4}\left(x_{u}\right)\right]$, then $k=1$ and the map $T$ has a 2-cyclical chaotic attractor intersecting the diagonal $\Delta $, given by $I_{1}\times I_{1}$ and $I_{2}\times I_{2},$ plus two more distinct attracting chaotic rectangles given by $I_{1}\times I_{2}$ and $I_{2}\times I_{1}$ (i.e. two more $1$-cyclical chaotic attractors), see Figure \ref{Fig::2-chaotic}. If $f$ has 4-cyclical chaotic intervals $I_{1}$, $I_{2}$, $I_{3}$ and $I_{4}$, then $k=2$ and the map $T$ has a 4-cyclical chaotic attractor intersecting the diagonal $\Delta$, given by $I_{1}\times I_{1}\cup I_{2}\times I_{2} \cup I_{3}\times I_{3}\cup I_{4}\times I_{4}$, plus three more 4-cyclical chaotic attractors outside $\Delta$, given by $I_{1}\times I_{2}\cup I_{3}\times I_{2} \cup I_{3}\times I_{4}\cup I_{1}\times I_{4}$, $I_{2}\times I_{1}\cup I_{2}\times I_{3} \cup I_{4}\times I_{3}\cup I_{4}\times I_{1}$ and $I_{1}\times I_{3}\cup I_{4}\times I_{2} \cup I_{3}\times I_{1}\cup I_{2}\times I_{4}$. See Figure \ref{Fig::4-chaotic}. In general we can state the following.

\begin{theorem}\label{ChaoticDynamicsT}
Consider map $f$ in \eqref{wostcaseBR}-\eqref{qunderlinei} for $r>1$, $\overline{\gamma}>2\underline{b}$ and $f\left(x_{u}\right)<x_{m}$ (i.e. $\frac{1}{\overline{\gamma}/r+2\underline{b}}<\frac{1}{\bar{\gamma}}$), assuming that map $f$ has $2^{k}$-cyclical chaotic intervals $I_{1},...,I_{2^{k}}$ for $k\geq 0$. Then map $T$ has one $2^{k}$-cyclical chaotic set crossing the diagonal $\Delta $, given by the squares $I_{1}\times I_{1},\ldots,I_{2^{k}}\times I_{2^{k}}$. Moreover, for $k=1$, there are two $1$-cyclical chaotic attractors outside $\Delta$, that is, $I_{1}\times I_{2}$ and $I_{2}\times I_{1}$, while, for $k>1$, map $T$ has $\left(2^{k}-1\right)$ distinct $2^{k}$-cyclical chaotic rectangles external to $\Delta$, consisting of chaotic rectangles $I_{i}\times I_{j}$ for $i\neq j$ and $i,j\in\left\{1,\ldots,2^{k}\right\}$.
\end{theorem}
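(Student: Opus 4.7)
The plan is to reduce the two-dimensional dynamics of $T$ to a purely combinatorial problem on the indices of the intervals $I_1,\ldots,I_{2^k}$, and then to transfer the chaotic-attractor property from $f$ to $T$ via the decoupling $T^2(x,y)=(f^2(x),f^2(y))$ of Proposition \ref{MainPropT}, using the decoupled-square-system results of \ref{Appendix:Proof}. Under the hypotheses $r>1$, $\overline{\gamma}>2\underline{b}$ and $f(x_u)<x_m$ I label the $2^k$-cyclical chaotic intervals cyclically so that $f(I_j)=I_{j+1}$, with subscripts read modulo $2^k$; this is possible because each $I_j$ is invariant under $f^{2^k}$, on which $f^{2^k}$ is a chaotic interval map. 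Setting $R_{i,j}:=I_i\times I_j$, the definition (\ref{DST2}) of $T$ immediately gives $T(R_{i,j})=(f(I_j),f(I_i))=R_{j+1,i+1}$, so the combinatorial action on the index pair is $(i,j)\mapsto(j+1,i+1)$. Iterating this rule I obtain $T^{2\ell}(R_{i,j})=R_{i+2\ell,\,j+2\ell}$ and $T^{2\ell+1}(R_{i,j})=R_{j+2\ell+1,\,i+2\ell+1}$, which is the tool for the whole orbit analysis.

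I would then handle the three regimes of the theorem from this formula. On the diagonal ($i=j$) the rule reduces to $j\mapsto j+1$, producing a single $T$-cycle $\{R_{1,1},\ldots,R_{2^k,2^k}\}$ of length $2^k$ that crosses $\Delta$. Off the diagonal with $k=1$, one checks directly that $T(R_{1,2})=R_{1,2}$ and $T(R_{2,1})=R_{2,1}$, yielding two distinct 1-cyclical rectangles. Off the diagonal with $k\geq 2$, I would rule out any odd-period return: a fixed-point condition $T^{2\ell+1}(R_{i,j})=R_{i,j}$ forces simultaneously $i-j\equiv 2\ell+1\pmod{2^k}$ and $-(i-j)\equiv 2\ell+1\pmod{2^k}$, hence $2(i-j)\equiv 0\pmod{2^k}$, so $i-j$ is even; but $2\ell+1$ is odd, which is incompatible. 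Therefore the minimal return is the smallest even one, namely $m=2^k$, and a straightforward orbit-counting argument on the $2^k(2^k-1)$ off-diagonal rectangles, partitioned into $T$-orbits of size $2^k$, gives exactly $2^k-1$ distinct $T$-cycles of length $2^k$, matching the claim.

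To conclude I would transfer the chaotic-attractor property from $f$ to $T$. Because $2^k$ is even for $k\geq 1$, Proposition \ref{MainPropT} yields $T^{2^k}(x,y)=(f^{2^k}(x),f^{2^k}(y))$, so the restriction of $T^{2^k}$ to each rectangle $R_{i,j}$ in any of the above orbits is exactly the Cartesian product of two copies of $f^{2^k}$ acting chaotically on an interval; the case $k=0$ is even simpler, with $T$ itself leaving $I_1\times I_1$ invariant and $T^2=(f^2,f^2)$ chaotic there. Invoking the product-map properties in \ref{Appendix:Proof}, each orbit of rectangles is therefore a cyclical chaotic attractor of $T$. The step I expect to be the main obstacle is verifying the chaoticity of the product $(f^{2^k},f^{2^k})$ on the off-diagonal rectangles for $k\geq 2$: transitivity of an interval map does not automatically lift to its Cartesian square, so one really needs the topological mixing of the skew-tent normal form recalled from \cite{SushkoAvrutinGardini2015} and \cite{AvrutinGardiniSushkoTramontana2019}, together with the product-system statements in \ref{Appendix:Proof}, to ensure that no invariant proper sub-rectangle can split the off-diagonal attractor.
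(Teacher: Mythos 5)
Your proof is correct and follows essentially the same route as the paper, which proves this theorem by invoking the orbit-counting argument of Property \ref{PropertyA3} (the parity formula for $T^{k}\left(x_{i},x_{j}\right)$) with the periodic points $x_{i}$ replaced by the intervals $I_{i}$; your modular-arithmetic bookkeeping on the index pairs $\left(i,j\right)\mapsto\left(j+1,i+1\right)$ is an explicit specialization of that argument to $n=2^{k}$. The only place you go beyond the paper is in flagging that transitivity of $f^{2^{k}}$ on each $I_{i}$ must be upgraded to mixing for the product rectangles to be genuinely chaotic — a legitimate subtlety that the paper leaves implicit in its appeal to the skew-tent normal form.
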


\medskip

The proof of Theorem \ref{ChaoticDynamicsT} is the same as the proof of Property \ref{PropertyA3} in \ref{Appendix:Proof}, substituting periodic points $x_{i}$ and $x_{j}$ with intervals $I_{i}$ and $I_{j}$.

\begin{figure}[hbt!]
	\begin{centering}
		\includegraphics[scale=0.55]{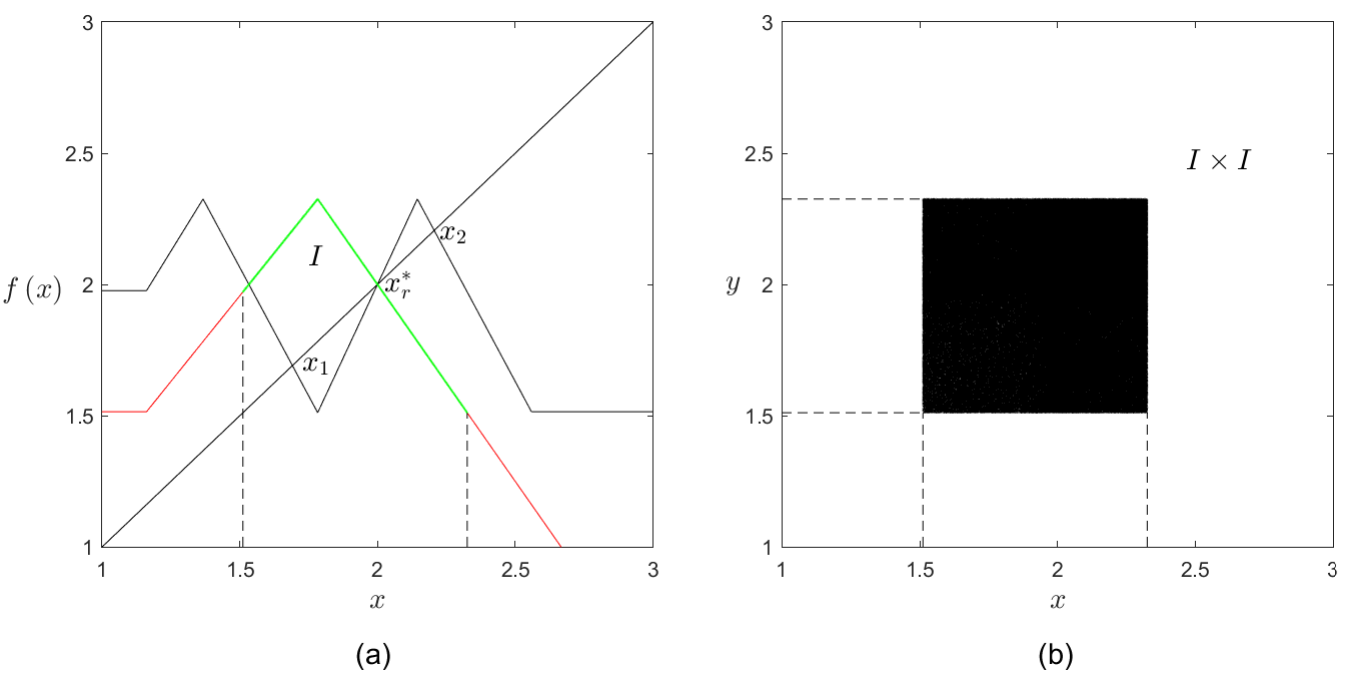}
			\caption{In (a), graph of function $f$ (in red), graph of function $f^{2}$ (in black), chaotic invariant sets $I$ (in green in the graph of $f$). In (b) state space $\left(x,y\right)$ of the map $T$, in white the basin of attraction of the chaotic attractor $I\times I$ (global attractor). Parameters: $\overline{b}=0.33$, $\underline{b}=0.1$, $\overline{\gamma}=0.3$ and $\underline{\gamma}=0$. (For interpretation of the references to color in this figure caption, the reader is referred to the web version of this paper.)}\label{Fig::1-chaotic}
	\end{centering}
\end{figure}

\begin{figure}[hbt!]
	\begin{centering}
		\includegraphics[scale=0.55]{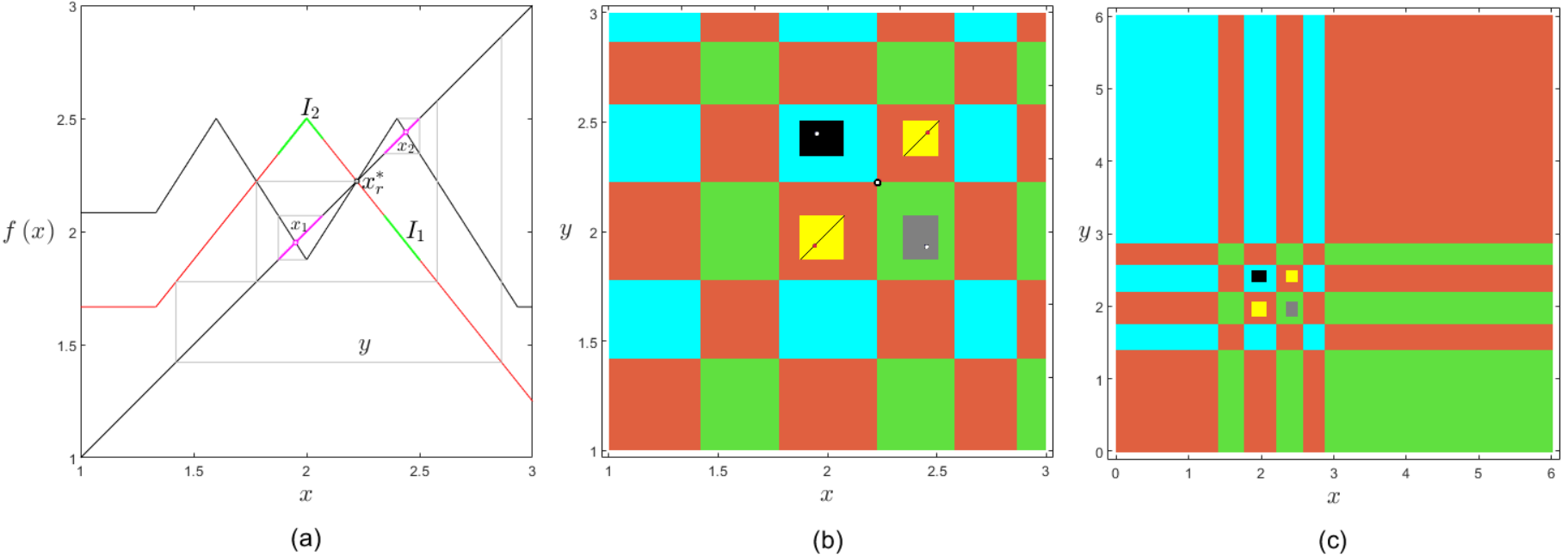}
			\caption{In (a), graph of function $f$ (in red), graph of function $f^{2}$ (in black), chaotic invariant sets $I_{i}$, $i=1,2$, (in green in the graph of $f$ and in magenta in the diagonal). In (b) state space $\left(x,y\right)$ of the map $T$, in red the basin of attraction of the 2-cyclical chaotic attractor $I_{1}\times I_{1}$ and $I_{2}\times I_{2}$ (yellow region), in green the basin of attraction of the chaotic rectangle $I_{1}\times I_{2}$ (gray rectangle), in light blue the basin of attraction of the chaotic rectangle $I_{2}\times I_{1}$ (black region), the two red points in the diagonal $\Delta$ indicate a repelling 2-cycle, in white are the fixed points (robust optimization equilibria). In (c), a larger window of the state space $\left(x,y\right)$ of the map $T$. Parameters: $\overline{b}=0.3$, $\underline{b}=0.1$, $\overline{\gamma}=0.25$ and $\underline{\gamma}=0$. (For interpretation of the references to color in this figure caption, the reader is referred to the web version of this paper.)}\label{Fig::2-chaotic}
	\end{centering}
\end{figure}

\begin{figure}[hbt!]
	\begin{centering}	
		\includegraphics[scale=0.55]{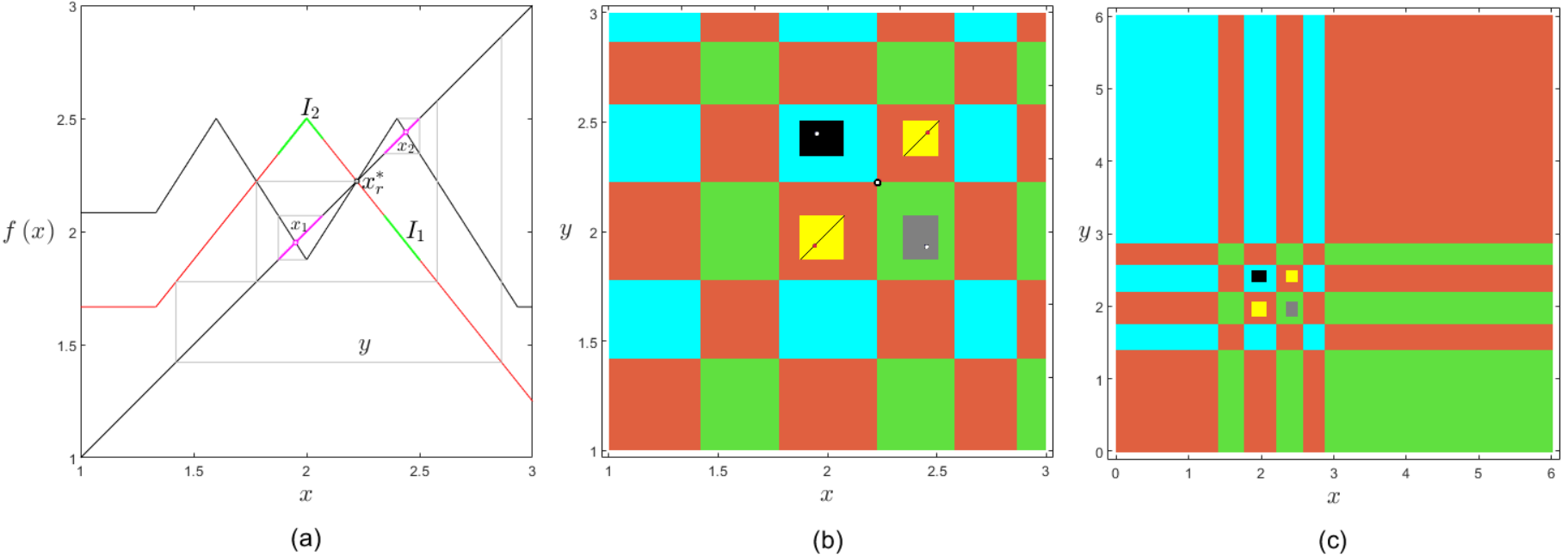}
			\caption{In (a), graph of function $f$ (in red), graph of function $f^{2}$ (in black), chaotic invariant sets $I_{i}$, $i=1,2,3,4$, (in green in the graph of $f$). In (b) state space $\left(x,y\right)$ of the map $T$, in light blue the basin of attraction of the 4-cyclical chaotic attractor depicted in blue, in green the basin of attraction of the 4-cyclical chaotic attractor symmetric to it (with respect to the diagonal $\Delta$) and depicted in dark green, in red the basin of attraction of the 4-cyclical chaotic attractor depicted in yellow, finally, in gray the basin of attraction of the 4-cyclical chaotic attractor depicted in black. In (c) a larger window of the phase space. Parameters: $\overline{b}=0.29$, $\underline{b}=0.11$, $\overline{\gamma}=0.235$ and $\underline{\gamma}=0.01$. (For interpretation of the references to color in this figure caption, the reader is referred to the web version of this paper.)}\label{Fig::4-chaotic}
	\end{centering}
\end{figure}

As stated in Theorem \ref{ChaoticDynamicsT} the existence of chaotic dynamics is associated with either a single attractor or multiple attractors, and some insightful considerations are worth making regarding their basins of attraction. First of all, when there exists one unique chaotic interval $I$ for map $f$, it is globally attracting and also map $T$ has a unique chaotic region, $I\times I$, which is also globally attracting, an example is shown in Figure \ref{Fig::1-chaotic}.\footnote{By an abuse of notation, global attractiveness has to be understood unless sets of zero measures.}

More insightful are the analysis of the cases where $2^{k}$ cyclical chaotic attractors, with $k\geq 1$, exist. Consider the case $k=1$, i.e. of two cyclical chaotic intervals for $f$ shown in Figure \ref{Fig::2-chaotic}(a), and denoted by $I_{1}$ and $I_{2}$. Then map $f$ has a repelling fixed point and a repelling 2-cycle internal to the chaotic intervals, that is inside $I_{1}\cup I_{2}$, whose boundaries are given by the critical point $c_{1}=f\left(x_{u}\right)$ and its three images by $f$, that we denote by $c_{i}$, with $\left(i=2,3,4\right)$. $I_{1}=\left[c_{3},c_{1}\right]$, $I_{2}=f\left(I_{1}\right)=\left[c_{2},c_{4}\right]$. The chaotic intervals $I_{1}$ and $I_{2}$ are invariant for $f^{2}$ and the preimages of any rank of the repelling fixed point $x_{r}^{\ast }$ (rank-1 preimage, two rank-2 preimages and one rank-3 preimage of $x_{r}^{\ast }$) separate intervals of points converging to one or the other chaotic interval for map $f^{2}$, see Figure \ref{Fig::2-chaotic}(a), where the existing preimages are shown, and are thus related to the basins of attraction for map $T$. Specifically, considering the same example but looking at map $T$, this map has three repelling fixed points and a repelling 2-cycle on the diagonal (where the dynamics is provided by map $f$), a 2-cyclical chaotic attractor $I_{1}\times I_{1}$ and $I_{2}\times I_{2}$ whose boundaries are known segments (see the yellow chaotic set in Figure \ref{Fig::2-chaotic}(b),(c), including the repelling 2-cycle, and basin in red), and two symmetric chaotic rectangles $I_{1}\times I_{2}$ and $I_{2}\times I_{1}$ (see the black and gray rectangles in Figure \ref{Fig::2-chaotic}(b),(c), including each a repelling fixed point, and basins of different colors), see Proposition \ref{MainPropT}. The boundaries of the chaotic sets are known segments on horizontal and vertical lines of equation $x=c_{i}$ and $y=c_{i}$ $\left(i=1,2,3,4\right)$. The related basins are bounded by segments belonging to horizontal and vertical lines of equation $x=x_{r}^{\ast }$ and its preimages (see Figure \ref{Fig::2-chaotic}(b),(c)).

The configuration of the basins of attraction becomes even more complicated when $4$-cyclical chaotic attractors exist. See Figure \ref{Fig::4-chaotic}, where in panel (a), close to the degenerate flip bifurcation of the fixed point $x_{r}^{\ast }$, we observe a 4-pieces chaotic set for map $f$. In correspondence to the same parameter configuration, in Figure \ref{Fig::4-chaotic}(b),(c), we observe that map $T$ has a stable 4-cyclical chaotic set crossing the diagonal (with basin in red), coexisting with a stable 4-cyclical chaotic set transverse to the diagonal (with basin in gray), and with a stable 4-cyclical chaotic set above the diagonal (with basin in light blue) and its symmetric one below the diagonal (with basin in green). The basins are separated by the preimages of the repelling fixed point and by the preimages of the repelling 2-cycle (whose periodic points are not included in the 4-cyclical chaotic intervals on the diagonal). 

All in all, the basins of attraction are disconnected whenever chaotic attractors coexist and this represents a further element of complexity that the conservative approach to uncertainty introduces in a duopoly game with constant expectations. It is remarkable to observe that the disconnected basins of attraction underline that a firm with an initially smaller market share can dominate the market in the long run (see the structure of the disconnected basins of attraction in Figures \ref{Fig::2-chaotic} and \ref{Fig::4-chaotic}). However, for this to occur both uncertainty and ambiguity aversion are required as disconnected basins of attraction are not observable in a complete-information duopoly game. In addition, we observe that disconnected basins of attraction, cross regions of the action space characterized by different worst-case realizations of the parameters $b$ and $\gamma$. This indicates a transitory dynamics towards a chaotic attractor where the firm's ambiguity aversion is violated (firms expect a worst-case realization, which does not however occur).

To summarize, when the dynamics of $f$ lead to a fixed, globally attracting point, then also map $T$ has a unique globally attracting fixed point, see Corollary \ref{ThDafAT}. Alternatively, attracting chaotic regions emerge. To better highlight the extent of the parameter region for which chaotic dynamics occur, two-dimensional bifurcation diagrams are drawn. Specifically, Figure \ref{Fig::2d-bif-diag-new} shows that in the $\left(\overline{\gamma},\overline{b}\right)$-parameter plane (at fixed values of the other parameters), the yellow region (where $\left(x_{l}^{\ast },x_{l}^{\ast }\right)$ is globally attracting fixed point) is dominant. Outside the yellow region, we have either an orange region, where the equilibrium $\left(x_{r}^{\ast },x_{r}^{\ast }\right)$ is globally stable, or a white region where chaotic attractors ($2^{n}$-cyclical chaotic attractors) coexist and are attracting. The border of the yellow region with either the orange or the white regions, is the line $r=1$ and indicates the presence of a segment filled with equilibria, see Figure \ref{Fig::qualitative2d}(d). By contrast, the red line is the separator of the orange region and of the white region and is the bifurcation curve $\overline{\gamma}=2\underline{b}$.

The two bifurcation diagrams in Figure \ref{Fig::2d-bif-diag-new} point out that it is sufficient to increase the spread in the values of $b$ at the worst-case realizations, that is $\overline{b}-\underline{b}$ sufficiently large, to have a globally stable Cournot-Nash equilibrium. On the contrary, it is not sufficient to increase the spread on the values of $\gamma$ at the worst-case realizations, that is $\overline{\gamma}-\underline{\gamma}$ sufficiently large, to have an unstable Cournot-Nash equilibrium as the further condition $\overline{\gamma}>2\underline{b}$ is required. Moreover, it is also worth observing that increasing $\underline{b}$ from $0.1$, see panel (a) in Figure \ref{Fig::2d-bif-diag-new}, to $0.15$, see panel (b) in Figure \ref{Fig::2d-bif-diag-new}, the width of the line strip representing chaotic region increases although the spread on the values of $b$ at the worst-case realization decreases. This is counterintuitive and underlines, once more, that not only do the spreads on the uncertain parameters at the worst-case realizations impact on the stability of the Cournot-Nash equilibrium, but also the configuration of the uncertainty set.

As a further remark, we underline that by relaxing the assumption that $\overline{b}>\overline{\gamma}$, chaotic dynamics persist only for $\overline{\gamma}$ slightly larger than $\overline{b}$. Then, further increasing $\overline{\gamma}$, periodic dynamics appear and cycles may have different periodicity. As the scope is to show chaotic dynamics in a simple duopoly setting with ambiguity aversion, the investigation of the dynamics in the parameter region $\overline{b}<\overline{\gamma}$ is omitted.

\begin{figure}[hbt!]
	\begin{centering}
		\includegraphics[scale=0.65]{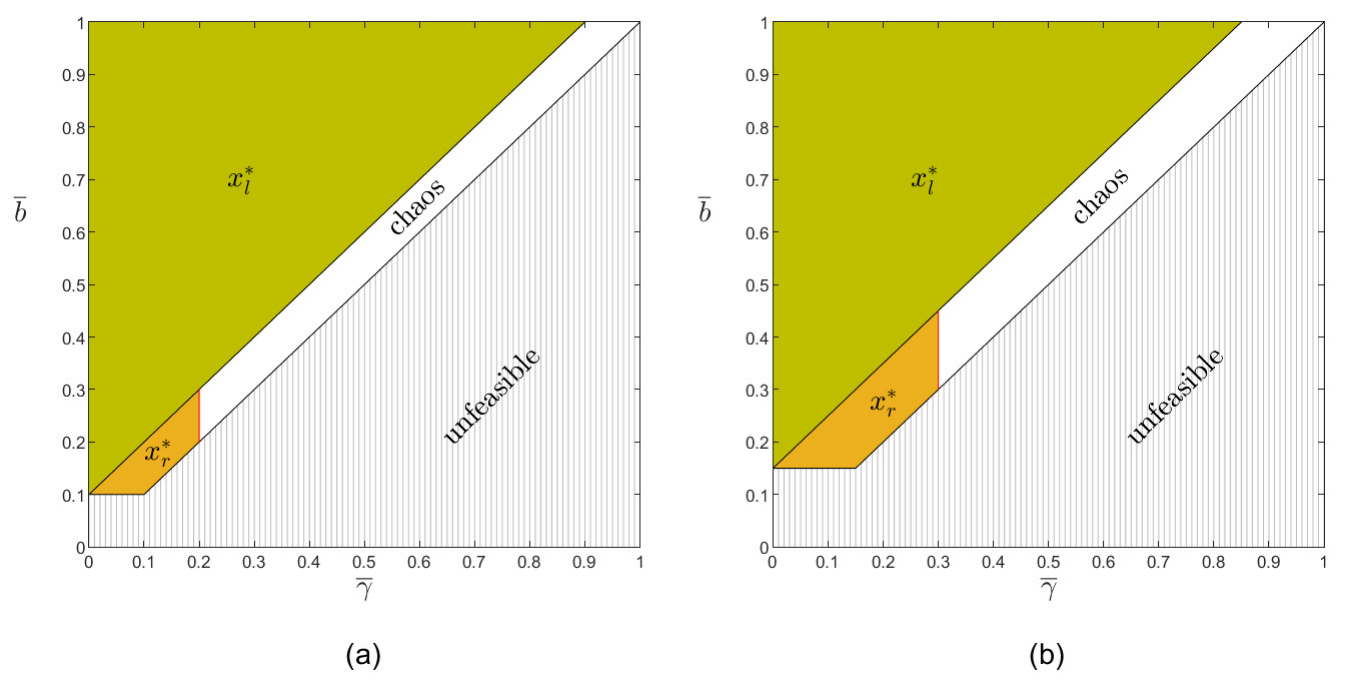}
			\caption{$\left(\overline{\gamma},\overline{b}\right)$-parameter plane (the region of interest is $\overline{b}\geq \overline{\gamma}$ and $\overline{b}\geq \underline{b}$). In yellow the stability region of the equilibrium $\left(x^{*}_{l},x^{*}_{l}\right)$ (labeled as $x^{*}_{l}$ in the figure), in orange the stability region of the equilibrium $\left(x^{*}_{r},x^{*}_{r}\right)$ (labeled as $x^{*}_{r}$ in the figure), in white the stability region of a chaotic regime. In (a), $\underline{b}=0.1$. In (b), $\underline{b}=0.15$. Remaining parameter: $\underline{\gamma}=0$. (For interpretation of the references to color in this figure caption, the reader is referred to the web version of this paper.)}\label{Fig::2d-bif-diag-new}
	\end{centering}
\end{figure}

The final remark regards the robustness of the results. Let us assume that only one firm has constant expectations, say firm 2, while the other firm has perfect foresight expectations, say firm 1. In this case, firm 2 acts naively, firm 1 acts sophisticatedly (being a worst-case maximizer both in the quantity-expectation space and in the action space) and the quantity dynamics of the duopoly is given by $\left(x^{\prime},y^{\prime}\right)=\left(f\circ f\left(x\right),f\left(x\right)\right)$. Then, $y$ is an auxiliary variable and $x$ is the only state variable of the duopoly, that is, the quantity dynamics of the duopoly is generated by the one-dimensional map $x^{\prime}=f\circ f\left(x\right)$. This map is the second iterated of $f$ and its graph is observable in Figures \ref{Fig::1-chaotic}(a), \ref{Fig::2-chaotic}(a) and \ref{Fig::4-chaotic}(a), for different configurations of the uncertainty set. Since $f\circ f$ exhibits chaotic dynamics when $f$ exhibits chaotic dynamics, we can conclude that chaotic dynamics persist in a duopoly game with ambiguity aversion where one firm acts sophisticatedly and the other one acts naively.

\section{Conclusions}\label{Sec::C}

In this paper, we have shown that combining ambiguity aversion, accommodated according to the worst-case approach to uncertainty, with a constant expectation scheme can compromise the stability of the Cournot-Nash equilibrium of a duopoly game and lead to chaotic dynamics. Specifically, assuming uncertainty about the degree of substitutability of the products and on the slopes of the inverse demand functions, a key result is the loss of stability of the Cournot-Nash equilibrium when firms are uncertain between a duopoly game and an oligopoly game with more than three firms.

This finding resembles Theocharis' result stating that the existence of more than three firms precludes the stability of the Cournot-Nash equilibrium in an oligopoly game without uncertainty, see, e.g., \cite{Theocharis1960}. In this respect, we conjecture that this standard result can be extended to games with uncertainty. More specifically, we conjecture that enriching an oligopoly-game setting as in \cite{Theocharis1960} with uncertainty does not favor stability when there are more than three competitors, but it can introduce chaotic dynamics. This conjecture will be the subject of future research.

The conducted investigation of the duopoly with ambiguity aversion is confined to the assumptions of a symmetric game and constant expectations. Relaxing the assumption of a symmetric setting, the dynamics of the model remain generated by a decoupled square system and a preliminary conducted analysis (not reported for the sake of space) confirms that coexisting Cournot-Nash equilibria and chaotic dynamics persist in a duopoly game with asymmetric uncertainty sets. Therefore, our results are robust. By contrast, introducing alternative expectation schemes, such as adaptive expectations, the dynamics of the duopoly game with ambiguity aversion is not described by a decoupled square system anymore and results and techniques for two-dimensional piecewise-smooth systems apply, see, e.g., \cite{SushkoGardini2008} and \cite{SushkoGardini2010}. The configuration of the player's uncertainty set is the third crucial assumption for the results obtained. Indeed, chaotic dynamics are not observed assuming an uncertainty set made by a single worst-case realization. However, relying on a single worst-case realization requires precise knowledge of the possible realizations of the values of the unknown parameters. Such knowledge is hardy observable in the real world and a conservative and consistent belief about the uncertainty set justifies two worst-case realizations. Finally, the generalization to an uncertainty set with more than two worst-case realizations is expected to increase the likelihood of chaotic dynamics.

All in all, the current work is the first attempt to analyze the effect of ambiguity aversion in stylized games in which the perfect foresight assumption about the action of the opponents is relaxed. Showing that ambiguity aversion can introduce very complicated dynamics even in very simple game-theory models is a remarkable finding that further underlines the effects of ambiguity aversion and paves the way for further research. Focusing on duopoly games, a possible further extension of the current work could be devoted to analyzing the persistence of chaotic dynamics when an adaptive expectation scheme is employed instead of a constant one. The effects of weaker forms of ambiguity aversion are also worth investigating.

\bigskip

\textbf{Acknowledgements:} We are grateful to Sushil Bikhchandani (editor), an anonymous Advisory Editor and anonymous Referees for helpful comments. The work was conducted within the research project on ``Models of behavioral economics for sustainable development'' founded by the Department of Economics, Society and Politics (DESP) of the University of Urbino. Davide Radi acknowledges the support of the Czech Science Foundation (GACR) under project 20-16701S and V\v{S}B-TU Ostrava under the SGS project SP2022/4. The authors acknowledge comments received from the participants to the Nonlinear Economic Dynamics (NED - 2021) conference in Milan.

\bigskip

\textbf{Declarations of interest:} None.

\bigskip

\textbf{Compliance with Ethical Standards:}
The authors declare no conflict of interest.


\appendix
\section{}\label{Appendix:Proof}

In this appendix we recall some properties regarding periodic cycles of map $T$ given in \eqref{DST}-\eqref{DST2}, and in general of a map having the following
structure $\left(x,y\right)\rightarrow\left(f\left(y\right),f\left(x\right)\right)$. These properties already considered in \cite{BischiMammanaGardini2000} and \cite{TramontanaGardiniPuu2010}, and taylored here for the scope of the current paper, are very useful because they show that we can obtain the coordinates of the periodic points of $T$ considering the periodic points of the one-dimensional map $f$.

Let us start by classifying cycles of $T$ in \emph{singly-generated}, so-called because their existence for map $T$ is a direct consequence of the existence of \emph{one} unique cycle of $f$, and \emph{doubly-generated}, so-called because associated with a pair of cycles of $f$. 

The following Property \ref{PropertyA3} classifies the singly-generated cycles of $T$.

\begin{property}[singly-generated cycles]\label{PropertyA3}
Let $\Delta$ be the diagonal $x=y$ and let $\left\{x_{1},x_{2},...,x_{n}\right\}$ be a cycle of $f$ of first period $n\geq 1$:
\begin{itemize}
\item If $n$ is odd then $T$ has: (a) one cycle of period $n$ on $\Delta$; (b) $\left(n-1\right)/2$ cycles of period $2n$ external to $\Delta$.
\item If $n$ is even and $n/2$ is also even then $T$ has $n$ cycles of period $n$, one on $\Delta$ and $\left(n-1\right)$ external to $\Delta$.
\item If $n$ is even and $n/2$ is odd then $T$ has: (a) $2$ cycles of period $n/2$ external to $\Delta$; (b) $\left(n-1\right)$ cycles of period $n$, one of which on $\Delta$.
\end{itemize}
\end{property}

\medskip

\begin{proof}[Proof of Property \ref{PropertyA3}]
Let $\left\{ x_{1},x_{2},...,x_{n}\right\} $ be a cycle of $f$ of first
period $n$ ($x_{i+1}=f\left(x_{i}\right)$ and $f^{n}\left(x_{i}\right)=x_{i}$ for $i=1,\ldots,n$) and consider the points of the Cartesian product $\left\{x_{1},\ldots,x_{n}\right\} \times \left\{x_{1},\ldots,x_{n}\right\}$. Moreover, consider the $k$-iterate of $T$, given by
\begin{equation}
T^{k}\left(x_{i},x_{j}\right)=\left\{ 
\begin{array}{lcr}
\left(f^{k}\left(x_{j}\right),f^{k}\left(x_{i}\right)\right) & \text{if} & k \ \text{is odd}  \\
\\
\left(f^{k}\left(x_{i}\right),f^{k}\left(x_{j}\right)\right) & \text{if} & k \ \text{is even}
\end{array}
\right.   \label{TPR}
\end{equation}
If $i=j$ we have a point on $\Delta$ and thus the first integer giving a cycle is $k=n$ (and we get the $n-$cycle on $\Delta$), while for $i\neq j$ we have a point $\left(x_{i},x_{j}\right)$ external to $\Delta$, and the first integer giving a cycle depends on the period $n$. If $n$ is odd, then the first integer giving a periodic point in \eqref{TPR} is $k=2n$ so that $\left(x_{i},x_{j}\right)$ belongs to a cycle of $T$ external to $\Delta$ of first period $2n$. Such distinct cycles must equal $\left(n^{2}-n\right)/\left(2n\right)=\left(n-1\right)/2$ in number. If $n$ is even, then $T^{k}\left(x_{i},x_{j}\right) = \left(f^{k}\left(x_{j}\right),f^{k}\left(x_{i}\right)\right) = \left(x_{i},x_{j}\right)$ may occur for $k$ odd and such that $T^{2k}\left(x_{i},x_{j}\right) = \left(f^{2k}\left(x_{i}\right),f^{2k}\left(x_{j}\right)\right)=\left(x_{i},x_{j}\right)$. This indicates that either a prime period $k=n/2$ exists, with $k$ odd and the periodic points belonging to two distinct cycles of period $n/2$ are $\left(x_{i},x_{i+n/2}\right)$ and $\left(x_{i+n/2},x_{i}\right)$, or 
$T^{k}\left(x_{i},x_{j}\right)  = \left( f^{k}\left(x_{i}\right),f^{k}\left(x_{j}\right)\right) = \left(x_{i},x_{j}\right)$ occurs for $k$ even which leads to a prime period $k=n$ external to $\Delta$, and at most $\left(n^{2}-n\right)/n=\left(n-1\right)$ distinct cycles of period $n$ can exist.
\end{proof}

\medskip

The following Property \ref{PropertyA4} classifies the doubly-generated cycles of $T$ associated with each pair of cycles of $f$. Let us recap that a cycle of $T$ is denoted doubly-generated because each point of the cycle has the coordinates belonging to two different cycles of $f$.

\begin{property}[doubly-generated cycles]\label{PropertyA4}
Let $\left\{x_{1},...,x_{n}\right\}$ be a cycle of $f$ of first period $n\geq 1$, and $\left\{ y_{1},...,y_{m}\right\}$ a cycle of $f$ of first period $m\geq 1$, and let $L$ be the least common multiple between $n$ and $m$, then the cycles of $T$ of type doubly-generated are as follows:
\begin{itemize}
\item If $n$ and $m$ are odd then $T$ has $\frac{n\cdot m}{L}$ cycles of period $2L$;
\item If $n$ or/and $m$ are even then $T$ has $\frac{2n\cdot m}{L}$ cycles of period $L$.
\end{itemize}
\end{property}

\medskip

\begin{proof}[Proof of Property \ref{PropertyA4}]
Consider the $k$-iterate of $T$, given by
\begin{equation}\label{TPP}
T^{k}\left(x_{i},y_{j}\right)=\left\{ 
\begin{array}{lcr}
\left(f^{k}\left(y_{j}\right),f^{k}\left(x_{i}\right)\right) &  \text{if} & $k$  \ \text{ is odd} \\
\\ 
\left(f^{k}\left(x_{i}\right),f^{k}\left(y_{j}\right)\right) & \text{if} &  $k$  \ \text{ is even}
\end{array}
\right.
\end{equation}
It is clear that when $L$ is odd (which can occur only when both $n$ and $m$ are odd), then the least integer giving a periodic point in \eqref{TPP} is $k=2L$, and we get a cycle of $T$ of period $2L.$ Such cycles may be $2\left(n\cdot m\right)/\left(2L\right)=n\cdot m/L$ in number$.$ When $L$ is even (which occurs when $n$ or/and $m$ are even), then the least integer giving a periodic point in \eqref{TPP} is $k=L$, so that we get a cycle of $T$ of period $L$, and such cycles may be in number $2\left(n\cdot m\right)/L$.
\end{proof}

\medskip

Regarding the stability/instability of the cycles of $T$, we make use of the fact that any cycle of $T$ is related to one (if singly-generated) or two (if doubly-generated) cycles of $f$, and it is easy to see, considering the Jacobian matrix of $T$, that the following Property \ref{PropertyA5} holds. The proof of the following Property \ref{PropertyA5} is omitted as it is straightforward.

\begin{property}\label{PropertyA5}
Let $X=\left\{x_{1},...,x_{n}\right\}$ be a cycle of $f$ of first period $n\geq 1$, and $Y=\left\{y_{1},...,y_{m}\right\}$ a cycle of $f$ of first period $m\geq 1$, then:
\begin{itemize}
\item If $X$ is locally asymptotically stable (resp. unstable) for $f$ with eigenvalue $\lambda$, such that $\left|\lambda\right|<1$ (resp. $\left|\lambda \right|>1$), then all the singly-generated cycles associated with $X$ are locally asymptotically stable (resp. unstable) nodes for $T$ with eigenvalues $\zeta _{1}=\zeta _{2}=\lambda$.
\item If $X$ and $Y$ are both locally asymptotically stable for $f$, with eigenvalues $\lambda$ and $\mu$, such that $\left|\lambda\right|<1$ and $\left|\mu\right|<1$, then all the doubly-generated cycles associated with $X$ and $Y$ are locally asymptotically stable nodes for $T$ with eigenvalues $\zeta_{1}=\lambda$ and $\zeta _{2}=\mu$.
\item If either $X$ or $Y$ is unstable for $f$, with eigenvalues $\lambda$ and $\mu$, such that $\left|\lambda\right|<1$ and $\left|\mu\right|>1$, then all the doubly-generated cycles associated with $X$ and $Y$ are unstable for $T$, of saddle type, with eigenvalues $\zeta_{1}=\lambda$ and $\zeta_{2}=\mu$.
\item If $X$ and $Y$ are both unstable for $f$, with eigenvalues $\lambda$ and $\mu$, such that $\left|\lambda\right|>1$ and $\left|\mu\right|>1$, then all the doubly-generated cycles associated with $X$ and $Y$ are unstable nodes for $T$, with eigenvalues $\zeta _{1}=\lambda$ and $\zeta_{2}=\mu$.
\end{itemize}
\end{property}



\end{document}